\newtheorem{theorem}{Theorem}
\newtheorem{proposition}{Proposition}
\newtheorem{assumption}{Assumption}
\newtheorem{lemma}{Lemma}
\newtheorem{cor}{Corollary}
\newtheorem{remark}{Remark}
\newcommand{\prt}[1]{\left(#1\right)}
\newcommand{\brk}[1]{\left[#1\right]}
\newcommand{\brc}[1]{\left\{#1\right\}}
\newcommand{\abs}[1]{\left|#1\right|}
\newcommand{\norm}[1]{\lVert#1\rVert}
\newcommand{\inProd}[2]{\langle#1,#2\rangle}
\DeclareMathOperator*{\argmin}{arg\,min}
\newcommand{\one}{\mathds 1}
\newcommand{\E}{\mathbb{E}}
\newcommand{\Ep}[1]{\E\big[#1\big]}
\newcommand{\R}{\mathbb R}
\newcommand{\Rno}{\mathbb R_{\ge 0}}
\newcommand{\Sn}{S_n}
\newcommand{\Fab}{\mathcal F_{\alpha,\beta}}
\newcommand{\Reg}{\text{Reg}_T}
\newcommand{\Ben}{\text{Ben}_T}
\newcommand{\Pot}{\text{Pot}_T}
\newcommand{\optx}[1]{x^{*,#1}}
\newcommand{\selfx}[1]{x^{s,#1}}
\newcommand{\estx}[1]{x^{#1}}
\newcommand{\CMnew}[1]{{\color{black}#1}}
\title{\LARGE \bf
Resource allocation in open multi-agent systems:\linebreak
an online optimization analysis
}
\author{Renato Vizuete, Charles Monnoyer de Galland, Julien M. Hendrickx, Paolo Frasca and Elena Panteley
\thanks{Research supported in part by the Agence Nationale de la Recherche (ANR) via grant “Hybrid And Networked Dynamical sYstems” (HANDY), number ANR-18-CE40-0010 and by the “RevealFlight” ARC at UCLouvain, by the \textit{Incentive Grant for Scientific Research (MIS)} \quotes{Learning from Pairwise Data} of the F.R.S.-FNRS. }
\thanks{R.~Vizuete and C.~Monnoyer de Galland equally contributed to this work. R.~Vizuete and E.~Panteley are with Universit\'{e} Paris-Saclay, CNRS, CentraleSup\'{e}lec, Laboratoire des signaux et syst\`{e}mes, 91190, Gif-sur-Yvette, France. R.~Vizuete and P.~Frasca are with Univ.\ Grenoble Alpes, CNRS, Inria, Grenoble INP, GIPSA-lab, F-38000 Grenoble, France. C.~Monnoyer de Galland and J. M. Hendrickx are with the ICTEAM institute, UCLouvain, Louvain-la-Neuve, Belgium. C.~Monnoyer de Galland is a FRIA fellow (F.R.S.-FNRS).  (E-mail adresses: ~renato.vizuete@l2s.centralesupelec.fr;
~charles.monnoyer@uclouvain.be;  ~julien.hendrickx@uclouvain.be; ~paolo.frasca@gipsa-lab.fr; ~elena.panteley@l2s.centralesupelec.fr).}}
\begin{document}

\maketitle
\thispagestyle{empty}
\pagestyle{empty}

%%%%%%%%%%%%%%%%%%%%%%%%%%%%%%%%%%%%%%%%%%
%%%%%%%%%%%%%%%%%%%%%%%%%%%%%%%%%%%%%%%%%%
\begin{abstract}
The resource allocation problem consists of the optimal distribution of a budget between agents in a group. 
We consider such a problem in the context of open systems, where agents can be replaced at some time instances. 
These replacements lead to variations in both the budget and the total cost function that hinder the overall network's performance. 
For a simple setting, we analyze the performance of the Random Coordinate Descent algorithm (RCD) using tools similar to those commonly used in online optimization. 
In particular, we study the accumulated errors that compare solutions issued from the RCD algorithm and the optimal solution or the non-collaborating selfish strategy and we derive some bounds in expectation for these accumulated errors.
%We consider a version of this optimization problem in open systems where agents can be replaced, resulting in variations of the budget and of the total cost function to be minimized. We analyze the performance of the Random Coordinate Descent algorithm (RCD) in that setting using natural performance indexes which are related to those used in online optimization. In a simple setting, we study the accumulated error obtained from using the RCD as compared to the optimal solution and the accumulated \CMnew{error} obtained from using the RCD instead of not collaborating and we derive some bounds in expectation \CMnew{for these accumulated errors.}
% , showing that these quantities grow linearly with the number of iterations considered for the computation.
\end{abstract}

\section{Introduction}
We consider the optimal resource allocation problem, where a fixed amount of resource must be distributed among $n$ agents while minimizing some separable cost function $f$ \cite{ibaraki1988resource}. 
Problems of this type can be found in many different fields of research including distributed computer systems \cite{kurose1989microeconomic}, games \cite{liang2017distributed}, smart grids \cite{dai2021distributed}, etc. 
In some specific formulations like actuator networks \cite{teixeira2013distributed} or power systems \cite{yi2016initialization}, each agent $i$ holds a quantity $d_i$ (which we call here the \quotes{demand} of agent $i$), so that the total amount of resource to be distributed is $\sum_{i=1}^nd_i$; the problem can then be written as% yielding
\begin{align}
    \label{eq:CDC:RA}
    &\min_{x\in\R^{np}} f(x) = \sum_{i=1}^nf_i(x_i)&
    &\hbox{s.t.}&
    &\sum_{i=1}^nx_i=\sum_{i=1}^nd_i,
\end{align}
where each function $f_i:\R^p\to\R$ is $\alpha$-strongly convex and $\beta$-smooth, and represents the local cost held by agent $i$.

Problems of this type have received a lot of attention in the last years and most of them are related to a possible change in the budget due to a variation in the demand of some of the agents 
% \cite{bai2018distributed,huang2021distributed,wang2022distributed}
\cite{bai2018distributed,wang2022distributed}. However, in these works, only quadratic functions are considered which significantly restrict the set of potential cost functions of the agents and do not correspond to the standard assumptions in the field of convex and smooth optimization \cite{nesterov2018lectures,MAL-050}.

In addition to the possible variations of the budget over time in \eqref{eq:CDC:RA}, the composition of the system may also change during the whole process due to the arrival, departure or replacement of agents at a time-scale comparable to that of the process, giving rise to \textit{open multi-agent systems}. 
Those are motivated by the growing size of the systems that tends to slow down the process as compared to the time-scale of potential changes in the set of agents. 
More generally, systems naturally allowing agents to join and leave are becoming common, such as \textit{e.g.} multi-vehicle systems or with the Plug and Play implementation \cite{OMAS:PnP:DecentralizedMPC:riverso2012,OMAS:PnP:PartisionBasedDistrKalman:farina2018}. %, 
In the case of \eqref{eq:CDC:RA}, it results in the system size $n_t$, the local cost functions $f_i^t$, and the local demands $d_i^t$ becoming time-varying. 
As a consequence, the instantaneous optimum of \eqref{eq:CDC:RA}, denoted $\optx{t}$, changes with the time as well, preventing usual convergence.

Due to these possible changes in the dimension of the system, most of the related works in the field of open multi-agent systems are focused on the analysis of scalar performance indexes associated with the process, which allow overcoming the problem of time-varying dimensions. For instance, in \cite{hendrickx2017open,monnoyer2020open} the variance is proposed as a metric for the analysis of a pairwise gossip algorithm, while in \cite{vizuete2021noise} the mean squared error is the object of study in randomized interactions. 
Regarding optimization, problems such as \eqref{eq:CDC:RA} typically imply a minimization process on a long period, and hence the cost is expected to be paid on a regular basis.
In such setting, a natural way of measuring the performance of an algorithm is to compute its accumulated error with respect to a given strategy over a finite number of iterations.
Similar metrics occur in the context of online optimization \cite{DO:online-varyingFunctions}, where the objective is to minimize the so-called \emph{regret}, commonly defined as the accumulated error of the estimate $\estx{t}$ with respect to $x^*:=\argmin_x \sum_{t=1}^T f^t(x)$, or sometimes with respect to the time-varying solution of \eqref{eq:CDC:RA} $\optx{t}:=\argmin_x f^t(x)$ such as \textit{e.g.}, in \cite{shahrampour2018Online}.
%commonly defined as the accumulated error with respect to the (sometimes time-varying) optimal solution of \eqref{eq:CDC:RA}, denoted $\optx{t}$.
%\CMnew{, namely $\sum_{t=1}^T \prt{f^t(\estx{t})- f^t(\optx{t})}$}.
Other extensions of the regret include the case of time-varying constraints, where a similar metric is used to measure the violation \cite{yi2021regret}.

In this work, we analyze the performance of the Random Coordinate Descent algorithm (RCD) \cite{necoara2013random,monnoyer2022random} to solve \eqref{eq:CDC:RA} in open systems.
We study the loss accumulated by the RCD algorithm with respect to the time-varying optimal solution $\optx{t}$ over a finite number of iterations, and its gain with respect to the \emph{selfish strategy} $\selfx{t}$, which consists in the absence of collaboration between the agents (\textit{i.e.}, $\selfx{t}_i=d_i^t$), by obtaining upper bounds.
Finally, we consider the case of quadratic cost functions, for which tighter results are derived.

%%%%%%%%%%%%%%%%%%%%%%%%%%%%%%%%%%%%%%%%%%%%%%%%%%%%%%%
%%%%%%%%%%%%%%%%%%%%%%%%%%%%%%%%%%%%%%%%%%%%%%%%%%%%%%%
\section{Problem formulation}
\label{sec:CDC:Statement}
The set of real numbers is denoted by $\R$ and the set of nonnegative integers by $\mathbb{Z}_{\ge 0}$. For two vectors $x,y\in\R^n$, $\inProd{x}{y} = x^\top y = \sum_{i=1}^n x_iy_i$ denotes the usual Euclidean inner product and $\norm{x} = \sqrt{x^\top x}$ the Euclidean norm. 
% The 1-norm of a vector $x$ is denoted as $\norm{x}_1$.
The set of $n$-dimensional vectors with nonnegative entries is denoted by $\Rno^n$.
We denote the vector of size $n$ constituted of only zeros by $\mathbf{0}_n$ and of only ones by $\mathds{1}_n$.

%%%%%%%%%%%%%%%%%%%%%%%%%%%%%%%%%%%%%%%%%%%%%%%%%%%%%%%
\subsection{Open resource allocation problem}
\label{sec:CDC:Statement:ORA}
We consider the problem \eqref{eq:CDC:RA}, where we restrict to nonnegative states $x_i\in\Rno^p$ for the agents, and where the local cost functions satisfy the following assumption.
\begin{assumption}[Local cost function]
\label{ass:CDC:LocalCosts}
    The local cost function $f_i:\R_{\ge 0}^p\to\R_{\ge 0}$ of any agent $i$ is 
    \begin{itemize}
        \item continuously differentiable;
        \item $\alpha$-strongly convex: $f_i(x)-\frac{\alpha}{2}\norm{x}^2$ is convex $\forall x$;
        \item $\beta$-smooth: $\norm{\nabla f_i(x)-\nabla f_i(y)}\le \beta\norm{x-y},\forall x,y$;
        \item satisfies $\argmin\nolimits_{x\in\Rno^p}f_i(x)=\mathbf{0}_p$ and $f_i(\mathbf{0}_p)=0$.
    \end{itemize}
    More generally, we use $\Fab^p$ to denote the set of functions $f:\Rno^p\to\Rno$ satisfying these conditions. 
\end{assumption}
%An interpretation for 

Assumption~\ref{ass:CDC:LocalCosts} means that the cost paid by an agent is always nonnegative, and is zero only when the agent does not contribute at all to any activity, \textit{i.e.} $x_i=\mathbf{0}_p$, which is the minimal value taken by $x_i$.
It follows from Assumption~\ref{ass:CDC:LocalCosts} that the global cost satisfies $f(x) = \sum_{i=1}^n f_i(x_i)\in\Fab^{np}$.

To problem \eqref{eq:CDC:RA} we associate an undirected graph $G = (\mathcal V,\mathcal E)$, so that at random times a pair of agents $(i,j)\in \mathcal E$ is uniformly randomly chosen to interact and exchange information.
Moreover, we assume the system is subject to random instantaneous arrivals and departures of agents in the system, respectively resulting in a new agent joining the system with its own local cost function and demand, or in an agent leaving the system and never coming back, with the possibility of sending a last message to their neighbours.
We also consider replacements, which consist in the simultaneous occurrence of both an arrival and a departure.
Hence, the system size $n_t$, the local cost functions $f_i^t$ and the demands $d_i^t$ evolve with time, and consequently the instantaneous solution of \eqref{eq:CDC:RA} is time-varying as well, and is denoted $\optx{t}$.

%%%%%%%%%%%%%%%%%%%%%%%%%%%%%%%%%%%%%%%%%%%%%%%
\subsection{Simplifying assumptions and reformulation}
\label{sec:CDC:Statement:Assumptions}
For this preliminary work, we restrict to the specific case defined by the following assumptions.

\begin{assumption}[1-D functions]
\label{ass:CDC:1Dfunctions}
    The local cost function of any agent at any time is one-dimensional: $f_i^t:\R_{\ge 0}\to\R_{\ge 0}$.
\end{assumption}

\begin{assumption}[Homogeneous demand]
\label{ass:CDC:HomogeneousDemand}
    The demand associated with any agent $i$ at any time $t$ is $d_i^t=1$.
\end{assumption}

\begin{assumption}
\label{ass:CDC:Complete}
    The graph $G = (\mathcal V,\mathcal E)$ is complete.
\end{assumption}

Moreover, we restrict to the case where the openness of the system is solely characterized by replacements of agents (\textit{i.e.}, the simultaneous occurrence of an arrival and a departure), so that the system size is fixed, and $n_t=n$ for all time $t$.
Hence, the system only evolves at the instantaneous occurrences of either pairwise interactions of agents, resulting in a possible update of their states (denoted $U$) or replacements (denoted $R$).
We call these occurrences \quotes{\emph{events}}, and we can define the set of all the events that can possibly take place in the system, which we call \quotes{\emph{event set}}, as follows:
\begin{equation}
    \label{eq:CDC:Statement:Xi}
    \Xi := R\cup U = \prt{\bigcup\nolimits_{i\in\mathcal V}R_i} \cup \prt{\bigcup\nolimits_{(i,j)\in\mathcal E}U_{ij}},
\end{equation}
where $R_i$ denotes the replacement of agent $i$ and $U_{ij}$ a pairwise interaction between agents $i$ and $j$.
We assume that two distinct events never occur simultaneously, so that the system evolves in a discrete manner, where each time-step $k\in\mathbb{Z}_{\ge 0}$ corresponds to the occurrence of an event $\xi_k\in\Xi$.% happening in the system.

\begin{assumption}
    \label{ass:CDC:IndepEvents}
    An event $\xi_k$ is independent of all other events $\xi_j$, $j\neq k$ and of the state of the system $x^t$ until time $k$, so that at each time-step either an update (\textit{i.e.}, an event $U$) happens with fixed probability $p$ or a replacement (\textit{i.e.}, an event $R$) with fixed probability $1-p$.
\end{assumption}

Let $S_n := \brc{x\in\Rno^n:\mathds 1^\top x=n}$ be the feasible set, 
we can now express \eqref{eq:CDC:RA} in our setting under the assumptions of this section:
\begin{align}
    \label{eq:CDC:RA_OS}
    \min_{x\in\Sn} f^t(x) = \sum_{i=1}^nf_i^t(x_i).
\end{align}

%%%%%%%%%%%%%%%%%%%%%%%%%%%%%%%%%%%%%%%%%%%%%%%%%%%%%%%
\subsection{Performance metrics}
\label{sec:CDC:Statement:Metrics}
Natural indexes for measuring the performance of an algorithm in our setting consist in evaluating its accumulated error over a finite number of iterations with respect to a given strategy.
We define the two following strategies of interest in the context of the resource allocation problem:
\begin{itemize}
    \item \textbf{Perfect collaboration}: at each time instant $t$ the agents know the optimal solution of \eqref{eq:CDC:RA_OS} denoted $\optx{t}$;
    \item \textbf{Selfish players}: the agents do not collaborate to minimize $f^t$, and they operate at their individual desired point so that $\selfx{t}=\mathds1_n$ at all $t$.
\end{itemize}
Hence, for any $T$, the estimate $\estx{t}$ obtained with a well-designed algorithm is expected to satisfy
\begin{equation}
    \label{eq:CDC:Metrics:Strategies}
    \sum_{t=1}^Tf^t(\optx{t})\le \sum_{t=1}^Tf^t(x^t)\le \sum_{t=1}^Tf^t(\selfx{t}).
\end{equation}
The evolution of these strategies, compared with that of a given algorithm, is illustrated in Fig.~\ref{fig:CDC:SingleReal}.

\begin{figure}
    \centering
    \includegraphics[width=0.35\textwidth,clip = true, trim=2.5cm 11cm 2.75cm 11.25cm,keepaspectratio]{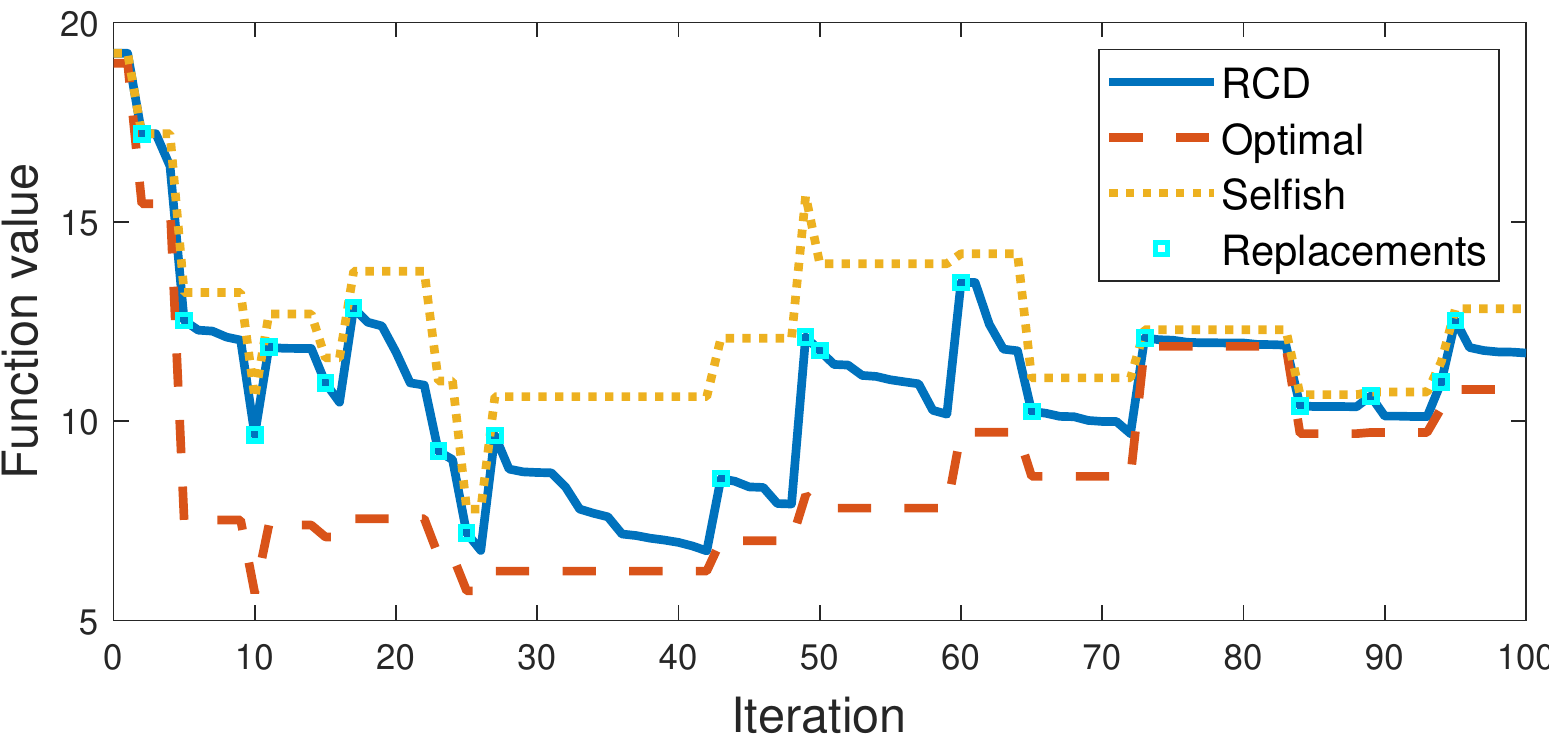}
    \caption{Evolution of the function value $f^t$ evaluated with the RCD algorithm $\estx{t}$ defined in \eqref{eq:CDC:Statement:RCD}, the optimal solution $\optx{t}$, and the selfish strategy $\selfx{t}$, in a system subject to replacements of agents (\textit{i.e.}, simultaneous departures and arrivals) on average once every $4$ RCD steps.}
    \label{fig:CDC:SingleReal}
\end{figure}  

We define the following performance metrics to analyze the value provided by the RCD algorithm $\estx{t}$ with respect to the strategies above: 
\vspace{-0.5cm}
\begin{align}
    \label{eq:CDC:Reg}
    &\hbox{\textit{\textbf{\small Dynamical Regret:}}}&
    &{\small \Reg := \sum_{t=1}^T \prt{f^t(x^t)-f^t(\optx{t})};}\\
    \label{eq:CDC:Ben}
    &\hbox{\textit{\textbf{\small Benefit:}}}&
    &{\small \Ben := \sum_{t=1}^T \prt{f^t(\selfx{t})-f^t(x^t)};}\\ 
    \label{eq:CDC:Pot}
    &\hbox{\textit{\textbf{\small Potential Benefit:}}}&
    &{\small \Pot := \sum_{t=1}^T \prt{f^t(\selfx{t})-f^t(\optx{t})}.}
\end{align}
%\vspace{-0.5cm}
    
The \quotes{\emph{dynamical regret}} and \quotes{\emph{benefit}} respectively measure the accumulated error from using a given algorithm with respect to the optimal solution $\optx{t}$ and the accumulated gain from using it instead of the selfish strategy $\selfx{t}$.
The \quotes{\emph{potential benefit}} is independent of the algorithm; it represents the accumulated advantage of the optimal strategy with respect to the selfish one, and satisfies $\Pot=\Ben+\Reg$.

Observe that the \emph{regret} commonly used in online optimization typically compares $\estx{t}$ with the overall optimal solution taken over all the iterations, \textit{i.e.}, $x^* = \argmin_{x\in\Sn}\sum_{t=1}^T f^t(x)$.
In that sense, it differs from the \emph{dynamical regret} in \eqref{eq:CDC:Reg}, which compares $\estx{t}$ with the time-varying instantaneous optimal solution $\optx{t} = \argmin_{x\in\Sn} f^t(x)$ at each iteration, such as \textit{e.g.}, in \cite{shahrampour2018Online}.

%%%%%%%%%%%%%%%%%%%%%%%%%%%%%%%%%%%%%%%%%%%%%%%%%%%%%%%
\subsection{Random Coordinate Descent algorithm and objective}
\label{sec:CDC:Statement:Objective}
We consider the Random Coordinate Descent algorithm (RCD) introduced in \cite{necoara2013random}, such as whenever a pair of agents $(i,j)\in\mathcal E$ interact, they update their respective estimates as
\begin{align}
    \label{eq:CDC:Statement:RCD}
    x_i^+ &= x_i-\tfrac1\beta(f_i'(x_i)-f_j'(x_j))\nonumber\\
    x_j^+ &= x_j-\tfrac1\beta(f_j'(x_j)-f_i'(x_i)).
\end{align}
We moreover assume that whenever an agent $in$ joins the system, it initializes its estimate as
\begin{equation}
    \label{eq:CDC:Statement:ArrivalRule}
    x_{in} = d_{in} = 1,
\end{equation}
and whenever an agent $out$ leaves the system, it sends a last message to all its neighbours (\textit{i.e.}, all the other agents in our setting) with its current estimate $x_{out}$ and its demand $d_{out}$ so the agents $i\neq out$ update their estimates as
\begin{equation}
    \label{eq:CDC:Statement:DepartureRule}
    x_i^+ 
    = x_i + \frac{x_{out}-x_i}{n}
    = \prt{1-\frac1n}x_i+\frac1nx_{out}.
\end{equation}
We show in the following proposition that RCD iterations, arrivals and departures as they are defined in \eqref{eq:CDC:Statement:RCD} to \eqref{eq:CDC:Statement:DepartureRule} guarantee that as long as the initial estimate $x^0$ is feasible, then all the estimates remain feasible. 

\begin{proposition}[Well-posedness]
\label{prop:CDC:Statement:Positivity&RA}
    The event set \eqref{eq:CDC:Statement:Xi} guarantees that if $x^0\in \Sn$, then $x^t\in \Sn$ for all $t$.
\end{proposition}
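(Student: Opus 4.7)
The plan is to proceed by induction on the time step $k$: assuming $x^k \in \Sn$, I would show that $x^{k+1} \in \Sn$ regardless of which event $\xi_k \in \Xi$ occurs. Membership in $\Sn$ combines the affine constraint $\one^\top x = n$ with coordinate-wise nonnegativity, so each case requires verifying both properties. Since $\Xi = R \cup U$ and a replacement decomposes into an arrival step \eqref{eq:CDC:Statement:ArrivalRule} and a departure step \eqref{eq:CDC:Statement:DepartureRule}, the case analysis reduces to three update rules.

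For the sum constraint, the pairwise RCD update \eqref{eq:CDC:Statement:RCD} satisfies $x_i^+ + x_j^+ = x_i + x_j$ by direct cancellation of the gradient-difference terms, leaving $\one^\top x^{k+1} = \one^\top x^k = n$. For a replacement $R_i$, I would first sum the departure updates over the $n - 1$ remaining agents,
\[
\sum_{l \neq out} x_l^+ = \prt{1 - \tfrac{1}{n}}(n - x_{out}) + (n-1)\tfrac{x_{out}}{n} = n - 1,
\]
so that the arrival rule setting $x_{in} = 1$ brings the total sum back to $n$.

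For nonnegativity, the departure update writes every $x_l^+$ as a convex combination $\prt{1 - 1/n}x_l + (1/n)x_{out}$ of the already-nonnegative quantities $x_l$ and $x_{out}$, hence $x_l^+ \geq 0$; the arrival rule initializes $x_{in} = 1 \geq 0$ directly. For the RCD update, I would argue that the step $\beta^{-1}\prt{f_i'(x_i) - f_j'(x_j)}$ does not drive either $x_i^+$ or $x_j^+$ below zero, relying on Assumption~\ref{ass:CDC:LocalCosts} (in particular, $\argmin_{x \in \Rno} f_l(x) = 0$ forces $f_l'(y) \geq 0$ for $y \geq 0$) together with the $\beta$-smoothness bound $f_l'(y) \leq f_l'(0) + \beta y$ to control the magnitude of the gradient difference relative to the current coordinates.

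I expect the RCD nonnegativity to be the main subtle point of the proof: sum preservation reduces to purely algebraic identities, whereas coordinate-wise nonnegativity reflects a genuine interplay between the calibrated step size $1/\beta$, the structural properties of the local cost functions, and the initial feasibility of $x^0$. Once both conditions are verified for each event type, the inductive step closes and the conclusion $x^t \in \Sn$ for all $t$ follows.
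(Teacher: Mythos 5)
Your proposal is correct and follows essentially the same route as the paper: the same case split into RCD updates, departures, and arrivals, the same algebraic cancellation for the sum constraint, the convex-combination argument for nonnegativity under departures, and the same two ingredients ($f_i'\geq 0$ on $\Rno$ and the smoothness bound on $f_i'$) for nonnegativity under RCD updates. The only point to make explicit when closing the RCD step is that $f_i'(0)=0$, so that $f_i'(x_i)\leq\beta x_i$ and hence $x_i^+\geq x_i-\tfrac1\beta f_i'(x_i)\geq 0$, which is exactly the paper's computation.
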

\begin{proof}
%\textcolor{red}{[Proof in appendix?]}
    We first consider arrivals: the nonnegativity of $x_i$ and preservation of the constraint is a direct consequence of  \eqref{eq:CDC:Statement:ArrivalRule}. 
    In the case of departures, the nonnegativity of $x_i$ is a direct consequence of \eqref{eq:CDC:Statement:DepartureRule}. 
    Moreover, if the constraint is satisfied at iteration $k$ with $n_k=n$, then under the departure of the agent labelled $out$ we have
    $$
    \sum_{i\neq out} x_{i}^{k+1}
        =n-x_{out}+\frac{x_{out}}{n}(n-1)-\frac{n-x_{out}}{n} 
        =n-1.
    $$
    We finally consider iterations of the RCD algorithm.
    From Assumptions~\ref{ass:CDC:LocalCosts} and \ref{ass:CDC:1Dfunctions}, it follows that for any $x_i\geq0$
    \begin{align*}
        x_if_i'(x_i)
        \ge \alpha\abs{x_i}^2
        \geq0,
    \end{align*}
    so that $f_i'(x_i)\geq0$.
    Moreover, since $f_i$ is $\beta$-smooth, one has $f_i(x_i)\leq\beta x_i$, and therefore at each update of the RCD algorithm between agents $i$ and $j$ there holds
    $$
    x_i^+=x_i-\frac{1}{2\beta}\prt{f'_i(x_i)-f'_j(x_j)}\ge x_i-\frac{1}{2\beta}(\beta x_i)=\frac{x_i}{2},
    $$
    establishing the nonnegativity of $x_i$. 
    A similar analysis can be used for $x_j$. Due to the symmetry of the update rule, the constraint is always preserved and we conclude the proof.
\end{proof}

Our goal is to analyze the performance of the RCD algorithm \eqref{eq:CDC:Statement:RCD} with the arrival and departure rules \eqref{eq:CDC:Statement:ArrivalRule} and \eqref{eq:CDC:Statement:DepartureRule} in the setting described in Sections~\ref{sec:CDC:Statement:ORA} and \ref{sec:CDC:Statement:Assumptions} using the metrics defined in Section~\ref{sec:CDC:Statement:Metrics} in expectation.

%%%%%%%%%%%%%%%%%%%%%%%%%%%%%%%%%%%%%%%%%%%%%%%%%%%%%%%
%%%%%%%%%%%%%%%%%%%%%%%%%%%%%%%%%%%%%%%%%%%%%%%%%%%%%%%
\section{Upper bounds on the performance metrics}
\label{sec:CDC:UpperBounds}

We now derive upper bounds on the evolution of the Potential Benefit and the Dynamical Regret respectively defined in \eqref{eq:CDC:Pot} and \eqref{eq:CDC:Reg} in expectation.
Whereas the former is only related to the problem itself, the latter actually depends on the algorithm we consider.

We first provide the following lemmas, where Lemma~\ref{lem:CDC:UpperBounds:Bounds_||x||} directly follows from the equivalence of the norms.

\begin{lemma}
\label{lem:CDC:UpperBounds:Bounds_||x||}
    Let $x\in \Sn$, then $n\leq\norm{x}^2\leq n^2$. 
\end{lemma}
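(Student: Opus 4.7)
The plan is to establish the two inequalities separately, using basic norm equivalence facts on the simplex-like set $S_n$.

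For the lower bound $\|x\|^2 \geq n$, I would apply the Cauchy--Schwarz inequality to the vectors $x$ and $\mathds{1}_n$, which yields $(\mathds{1}_n^\top x)^2 \leq \|x\|^2 \cdot \|\mathds{1}_n\|^2$. Since $x \in S_n$ means $\mathds{1}_n^\top x = n$ and $\|\mathds{1}_n\|^2 = n$, this rearranges to $n^2 \leq n \|x\|^2$, hence $\|x\|^2 \geq n$. Equivalently, this is just the inequality between the 2-norm and 1-norm (weighted by dimension), or equivalently the fact that among nonnegative vectors with fixed sum, the 2-norm is minimized by the uniform vector $\mathds{1}_n$, at which $\|\mathds{1}_n\|^2 = n$.

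For the upper bound $\|x\|^2 \leq n^2$, I would exploit the nonnegativity of the entries. Since $x_i \geq 0$ and $\sum_{j=1}^n x_j = n$, every coordinate satisfies $0 \leq x_i \leq n$. Therefore $x_i^2 \leq n \cdot x_i$ for each $i$, and summing gives $\|x\|^2 = \sum_{i=1}^n x_i^2 \leq n \sum_{i=1}^n x_i = n \cdot n = n^2$. Equivalently, this is the standard inequality $\|x\|_2 \leq \|x\|_1$ for nonnegative vectors, squared.

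There is no real obstacle here: the statement is a direct consequence of norm equivalence on $\mathbb{R}^n$, refined by the facts that $x$ lies in the positive orthant and has fixed coordinate sum. The author's remark that the result \emph{directly follows from the equivalence of the norms} confirms that a one-line argument suffices, so I would present both bounds very compactly without further elaboration.
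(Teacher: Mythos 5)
Your proof is correct and is exactly the norm-equivalence argument the paper invokes without writing out (the paper gives no explicit proof, only the remark that the lemma ``directly follows from the equivalence of the norms''). Both bounds are established by the standard relations $\norm{x}_1/\sqrt{n}\le\norm{x}_2\le\norm{x}_1$ for nonnegative $x$ with $\norm{x}_1=n$, so your write-up simply supplies the details the authors omitted.
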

\begin{lemma}
\label{lem:CDC:UpperBounds:SImple_Bound}
    Let $f(x) = \sum_{i=1}^n f_i(x_i)$, where all $f_i$ satisfy Assumptions~\ref{ass:CDC:LocalCosts} and \ref{ass:CDC:1Dfunctions}, then for any $x\in \Sn$ there holds 
    \begin{align}
        \label{eq:CDC:lem:UpperBounds:SImple_Bound}
        \frac{\alpha}{2}n\le f(x) \leq \frac\beta2n^2.
    \end{align}
\end{lemma}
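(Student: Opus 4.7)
The plan is to reduce the claim to a pair of pointwise quadratic sandwich inequalities on each local function $f_i$, anchored at the common minimizer $x_i=0$, and then translate a bound on $\|x\|^2$ into a bound on $f(x)$ via Lemma~\ref{lem:CDC:UpperBounds:Bounds_||x||}.

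First, I would observe that under Assumption~\ref{ass:CDC:LocalCosts} the point $0$ is the minimum of $f_i$ on $\R_{\ge 0}$ with $f_i(0)=0$, and that the combination of strong convexity and smoothness forces $f_i'(0)=0$ (the same fact is used implicitly in the proof of Proposition~\ref{prop:CDC:Statement:Positivity&RA}, where the bound $f_i'(x_i)\le \beta x_i$ is derived from $\beta$-smoothness). Instantiating the standard $\alpha$-strong-convexity and $\beta$-smoothness inequalities at the anchor $y=0$ then gives, for every $x_i\ge 0$,
\begin{equation*}
\tfrac{\alpha}{2}\,x_i^2 \;\le\; f_i(x_i) \;\le\; \tfrac{\beta}{2}\,x_i^2.
\end{equation*}

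Summing this sandwich over $i=1,\dots,n$ and using Assumption~\ref{ass:CDC:1Dfunctions} yields
\begin{equation*}
\tfrac{\alpha}{2}\,\norm{x}^2 \;\le\; f(x)=\sum_{i=1}^n f_i(x_i) \;\le\; \tfrac{\beta}{2}\,\norm{x}^2.
\end{equation*}
Applying Lemma~\ref{lem:CDC:UpperBounds:Bounds_||x||}, which states $n\le\norm{x}^2\le n^2$ on $\Sn$, to the left and right sides respectively delivers the claimed bounds $\tfrac{\alpha}{2}n \le f(x) \le \tfrac{\beta}{2}n^2$.

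The only non-mechanical point is the justification of $f_i'(0)=0$, i.e.\ that the minimum on the boundary of $\R_{\ge 0}$ is in fact an interior-type minimum of the extended function; this is precisely what allows the cross-term $f_i'(0)\,x_i$ in the first-order expansions to vanish and yields the clean quadratic bounds. Once that is granted, the rest is an application of the equivalence of norms on $\Sn$ already packaged in Lemma~\ref{lem:CDC:UpperBounds:Bounds_||x||}.
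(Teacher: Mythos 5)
Your proposal is correct and follows essentially the same route as the paper's proof: both rely on $f(\mathbf 0_n)=0$ and $\nabla f(\mathbf 0_n)=\mathbf 0_n$ to get the quadratic sandwich $\frac{\alpha}{2}\norm{x}^2\le f(x)\le\frac{\beta}{2}\norm{x}^2$ and then invoke Lemma~\ref{lem:CDC:UpperBounds:Bounds_||x||}; the only cosmetic difference is that you apply the sandwich coordinate-wise and sum, whereas the paper applies it directly to the separable sum $f$. Your closing caveat about justifying $f_i'(0)=0$ is apt --- the paper simply asserts it from Assumption~\ref{ass:CDC:LocalCosts} --- but since both arguments rest on the same fact, the proofs are equivalent.
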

\vspace{2mm}
\begin{proof}
    From Assumptions~\ref{ass:CDC:LocalCosts} and \ref{ass:CDC:1Dfunctions}, $f(0)=f'(0)=0$.
    Hence, since $f$ is $\beta$-smooth and using Lemma~\ref{lem:CDC:UpperBounds:Bounds_||x||}, there holds $f(x) \leq \frac\beta2\norm{x}^2\leq \frac\beta2n^2$ which establishes the upper bound.
    Similarly, since $f$ is $\alpha$-strongly convex and by using Lemma~\ref{lem:CDC:UpperBounds:Bounds_||x||}, it follows that $f(x) \geq \frac\alpha2\norm{x}^2\geq \frac\alpha2n$, which establishes the lower bound, and concludes the proof.
\end{proof}
Lemma~\ref{lem:CDC:UpperBounds:SImple_Bound} provides a global upper bound on the difference between any two solutions $x,y\in \Sn$:
\begin{align}
    \label{eq:prop:Oversimplified:SimpleBound}
    \abs{f^t(x)-f^t(y)} \leq \frac n2(n\beta-\alpha).
\end{align}
This can be used to derive upper bounds on any of the metrics defined in Section~\ref{sec:CDC:Statement:Metrics}, \textit{e.g.}, $\Ben \leq \frac n2(n\beta-\alpha)T$.

%%%%%%%%%%%%%%%%%%%%%%%%%%%%%%%%%%%%%%%%%%%%%%
\subsection{Potential Benefit}
\label{sec:CDC:UpperBounds:Pot}
We first obtain in the following theorem an upper bound on the expected value of the potential benefit, which we remind quantifies the accumulated advantage of using the optimal strategy rather than not collaborating at all.
\begin{theorem}
\label{thm:CDC:UpperBounds:PotUB}
    In the setting of Section~\ref{sec:CDC:Statement}, there holds
    \begin{equation}
        \label{eq:thm:CDC:UpperBounds:PotUB}
        \Pot 
        \leq \frac n2\alpha\prt{\kappa-1}T,
    \end{equation}
    and in particular
    \begin{equation}\label{eq:asymptotic_potential_benefit}
        \lim_{T\to\infty} \frac{\Pot}{T} \leq \frac n2\alpha(\kappa-1).
    \end{equation}
\end{theorem}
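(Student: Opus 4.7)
The plan is to derive a deterministic per-time-step bound $f^t(x^{s,t}) - f^t(\optx{t}) \leq \frac{n\alpha}{2}(\kappa-1)$ and then just sum it from $t=1$ to $T$. Because this per-step bound will hold pathwise for every realization of the random events (replacements only modify the $f_i^t$, which always remain in $\Fab^1$, and the selfish point $\mathds{1}_n$ never changes), the inequality \eqref{eq:thm:CDC:UpperBounds:PotUB} will follow immediately, with or without an expectation. Dividing by $T$ and letting $T\to\infty$ then yields \eqref{eq:asymptotic_potential_benefit}.

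For the upper side, I would bound $f^t(x^{s,t}) = f^t(\mathds{1}_n)$ by applying the $\beta$-smoothness inequality at $\mathbf{0}$. As already exploited inside the proof of Lemma~\ref{lem:CDC:UpperBounds:SImple_Bound}, Assumption~\ref{ass:CDC:LocalCosts} ensures $f_i^t(0) = 0$ and $(f_i^t)'(0) = 0$, so $f^t(x) \leq \tfrac{\beta}{2}\|x\|^2$ for every feasible $x$. Evaluating at $x = \mathds{1}_n$ and using $\|\mathds{1}_n\|^2 = n$ produces $f^t(x^{s,t}) \leq \tfrac{\beta n}{2}$. This is strictly sharper than the generic upper bound $\tfrac{\beta n^2}{2}$ of Lemma~\ref{lem:CDC:UpperBounds:SImple_Bound}, and exploiting this sharpness is exactly what drives the final $\kappa-1$ factor rather than the crude $n\beta-\alpha$ factor from \eqref{eq:prop:Oversimplified:SimpleBound}.

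For the lower side, since $\optx{t}\in\Sn$, Lemma~\ref{lem:CDC:UpperBounds:SImple_Bound} directly gives $f^t(\optx{t}) \geq \tfrac{\alpha n}{2}$. Subtracting the two bounds yields
\[
f^t(x^{s,t})-f^t(\optx{t}) \leq \tfrac{n}{2}(\beta-\alpha) = \tfrac{n\alpha}{2}(\kappa-1),
\]
after identifying $\kappa$ with the condition number $\beta/\alpha$. Summing over $t$ closes \eqref{eq:thm:CDC:UpperBounds:PotUB}, and the asymptotic statement \eqref{eq:asymptotic_potential_benefit} follows at once.

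There is no genuine obstacle: the argument reduces to evaluating the smoothness and strong-convexity inequalities at two carefully chosen points of $\Sn$. The only thing to watch is that one must not invoke the loose upper bound $\tfrac{\beta n^2}{2}$ of Lemma~\ref{lem:CDC:UpperBounds:SImple_Bound} at $x^{s,t}$, but instead exploit that $\mathds{1}_n$ is the unique minimizer of $\|\cdot\|^2$ over $\Sn$. Note also that the bound is independent of the algorithm, consistent with the interpretation of $\Pot$ as a purely problem-dependent quantity.
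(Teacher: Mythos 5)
Your proposal is correct and follows essentially the same route as the paper's proof: bound $f^t(\selfx{t})=f^t(\mathds 1_n)\leq\frac{\beta}{2}\norm{\mathds 1_n}^2=\frac{\beta n}{2}$ via $\beta$-smoothness at $\mathbf 0_n$, bound $f^t(\optx{t})\geq\frac{\alpha n}{2}$ via $\alpha$-strong convexity together with Lemma~\ref{lem:CDC:UpperBounds:Bounds_||x||}, subtract, and sum over $t$. The only cosmetic difference is that you invoke the lower bound of Lemma~\ref{lem:CDC:UpperBounds:SImple_Bound} where the paper rederives it inline, which is the same argument.
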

\begin{proof}
    Remember that $\selfx{t}=\mathds 1_n$ by definition, and that $f^t(\mathbf{0}_n)=0$ and $\nabla f^t(\mathbf{0}_n)=\mathbf{0}_n$ from Assumption~\ref{ass:CDC:LocalCosts}.
    Hence, there holds from the $\beta$-smoothness of $f^t$
    \begin{align*}
        f^t(\selfx{t})
        &\leq \nabla f^t(\mathbf{0}_n)^\top (\selfx{t}) + \frac\beta2\norm{\selfx{t}}^2
        = \frac\beta2\norm{\mathds 1_n}^2
        = \frac\beta2n.
    \end{align*}
    Similarly, since $f^t$ is $\alpha$-strongly convex, we get
    \begin{align*}
        f^t(\optx{t})
        \!&\geq\! \nabla f^t(\mathbf{0}_n)^\top (\optx{t}) + \frac\alpha2\norm{\optx{t}}^2
        = \frac{\alpha}{2}\norm{\optx{t}}^2
        \!\geq\! \frac\alpha2n,
    \end{align*}
    where the last inequality follows from Lemma~\ref{lem:CDC:UpperBounds:Bounds_||x||}.
    Hence
    \begin{align*}
        f^t(\selfx{t})-f^t(\optx{t})
        \leq \frac\beta2n-\frac\alpha2n
        = \frac n2(\beta-\alpha)
    \end{align*}
    holds, and injecting it into \eqref{eq:CDC:Pot} yields \eqref{eq:thm:CDC:UpperBounds:PotUB}.
    The last result then follows from dividing \eqref{eq:thm:CDC:UpperBounds:PotUB} by $T$.
\end{proof}

Notice that since the dynamical regret is nonnegative by definition, the bounds \eqref{eq:thm:CDC:UpperBounds:PotUB} and \eqref{eq:asymptotic_potential_benefit} also hold for the benefit since $\Ben=\Pot-\Reg\leq\Pot$. 

%%%%%%%%%%%%%%%%%%%%%%%%%%%%%%%%%%%%%%%%%%%%%%
\subsection{Dynamical Regret}
\label{sec:CDC:UpperBounds:Reg}
We now obtain an upper bound on the expected dynamical regret defined in \eqref{eq:CDC:Reg}, where we remind $\estx{t}$ is obtained with the RCD algorithm defined in Section~\ref{sec:CDC:Statement:Objective}.
For that purpose, we first introduce the following intermediate quantities:
\begin{align}   
    \label{eq:CDC:UpperBounds:Ct}
    C_t &:= f^t(x^t)-f^t(\optx{t});\\
    \label{eq:CDC:UpperBounds:Deltaft}
    \Delta f_t &:= f^{t+1}(x^{t+1})-f^t(x^t);\\
    \label{eq:CDC:UpperBounds:Deltaft*}
    \Delta f_t^* &:= f^{t+1}(\optx{t+1})-f^t(\optx{t}).
\end{align}
Thus, $C_t$ corresponds to the instantaneous loss of the RCD algorithm with respect to the optimal solution at iteration $t$, and $\Delta f_t$ and $\Delta f_t^*$ respectively stand for the instantaneous variation at one iteration of the total estimated cost and optimal cost, such that $C_{t+1} = C_t+\Delta f_t-\Delta f_t^*$.

In the following proposition, we study the effect or replacements on $\Delta f_t$ in order to later characterize $C_t$ in expectation, and consequently the expected dynamical regret.

\begin{proposition}
\label{prop:CDC:UpperBounds:Dft:Replacement}
    In the setting of Section~\ref{sec:CDC:Statement} the replacement of an agent, denoted $R$, results in
    \begin{equation}
    \label{eq:prop:CDC:UpperBounds:Dft:Replacement}
        \Ep{\Delta f_t\ |\ R}
        \leq \frac52\beta-\frac{3}{2}\alpha.
    \end{equation}
\end{proposition}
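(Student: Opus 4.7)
The plan is to expand $\Delta f_t$ conditional on a replacement of some agent $out$ and decompose it into three pieces: (a) the change in cost of the $n-1$ surviving agents whose states are updated through \eqref{eq:CDC:Statement:DepartureRule}, (b) the cost contributed by the newly arrived agent at state $x_{in}=1$, and (c) the removed cost $f_{out}^t(x_{out}^t)$ of the leaving agent. I then take the expectation conditional on $R$, which averages uniformly over the choice of $out$, and bound each piece using the $\alpha$-strong convexity and $\beta$-smoothness of the local functions together with Lemmas~\ref{lem:CDC:UpperBounds:Bounds_||x||} and \ref{lem:CDC:UpperBounds:SImple_Bound}.

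For (a), the departure rule gives $x_i^{t+1}-x_i^t = \frac{1}{n}(x_{out}^t-x_i^t)$ for every $i\neq out$, so $\beta$-smoothness of $f_i^t$ around $x_i^t$ yields $f_i^t(x_i^{t+1}) - f_i^t(x_i^t) \leq \tfrac{1}{n}(f_i^t)'(x_i^t)(x_{out}^t - x_i^t) + \tfrac{\beta}{2n^2}(x_{out}^t - x_i^t)^2$. I then sum over surviving $i$ and average over $out$ uniform in $\{1,\dots,n\}$, swapping the order of summation. The linear contribution collapses via the identity $\sum_{out\neq i}(x_{out}^t - x_i^t) = n - n x_i^t$ (which uses $\mathds 1^\top x^t = n$) into $\tfrac{1}{n}\sum_i (f_i^t)'(x_i^t)(1 - x_i^t)$. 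I split this as $\tfrac{1}{n}\sum_i (f_i^t)'(x_i^t) - \tfrac{1}{n}\sum_i (f_i^t)'(x_i^t) x_i^t$. The first sum is bounded by $\beta$ using the smoothness consequence $(f_i^t)'(x_i^t) \leq \beta x_i^t$ (which follows from $(f_i^t)'(0)=0$) together with $\sum x_i^t = n$; the second sum is bounded from below by $\alpha$ by combining strong convexity evaluated at $0$ (which gives $(f_i^t)'(x_i^t) x_i^t \geq f_i^t(x_i^t) + \tfrac{\alpha}{2}(x_i^t)^2 \geq \alpha (x_i^t)^2$) with the lower bound $\norm{x^t}^2 \geq n$ from Lemma~\ref{lem:CDC:UpperBounds:Bounds_||x||}. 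The quadratic contribution is handled via the identity $\sum_{out\neq i}(x_{out}^t - x_i^t)^2 = 2n\norm{x^t}^2 - 2n^2$, which yields in expectation at most $\tfrac{\beta}{n^2}(\norm{x^t}^2 - n) \leq \beta$, again by Lemma~\ref{lem:CDC:UpperBounds:Bounds_||x||}. Piece (a) therefore contributes at most $(\beta - \alpha) + \beta$.

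For (b), since $f_{in}^{t+1} \in \Fab^1$ satisfies $f_{in}^{t+1}(0) = (f_{in}^{t+1})'(0) = 0$, $\beta$-smoothness directly yields $f_{in}^{t+1}(1) \leq \tfrac{\beta}{2}$. For (c), averaging uniformly over $out$ gives $-\Ep{f_{out}^t(x_{out}^t)\mid R} = -\tfrac{1}{n} f^t(x^t) \leq -\tfrac{\alpha}{2}$ by Lemma~\ref{lem:CDC:UpperBounds:SImple_Bound}. Summing the four contributions returns $(\beta-\alpha) + \beta + \tfrac{\beta}{2} - \tfrac{\alpha}{2} = \tfrac{5}{2}\beta - \tfrac{3}{2}\alpha$, matching \eqref{eq:prop:CDC:UpperBounds:Dft:Replacement}.

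The main obstacle will be the sum-swapping step in (a): one must rearrange $\E_{out}\sum_{i\neq out}$ so that feasibility $\mathds 1^\top x^t = n$ cleanly eliminates the mean-field contribution and leaves the compact expression $\sum_i (f_i^t)'(f_i^t)(1-x_i^t)$, after which smoothness, strong convexity, and the two preliminary lemmas slot in routinely. The other pieces are essentially one-line applications of Assumption~\ref{ass:CDC:LocalCosts} and Lemma~\ref{lem:CDC:UpperBounds:SImple_Bound}.
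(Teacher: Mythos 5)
Your proposal is correct and follows essentially the same route as the paper: the paper likewise splits the replacement into the arrival contribution $g(1)\le\frac\beta2$ and the departure contribution averaged over the uniformly chosen leaving agent, expands via $\beta$-smoothness (of the separable global $f$ rather than coordinatewise, which is equivalent), and uses the same identities $\sum_{\ell}\langle\nabla f(x),x_\ell\mathds 1-x\rangle=n\langle\nabla f(x),\mathds 1-x\rangle$ and $\sum_\ell\norm{x_\ell\mathds 1-x}^2=2n(\norm{x}^2-n)$ together with Lemmas~\ref{lem:CDC:UpperBounds:Bounds_||x||} and \ref{lem:CDC:UpperBounds:SImple_Bound} to reach $2\beta-\tfrac32\alpha$ for the departure and hence $\tfrac52\beta-\tfrac32\alpha$ overall. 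The only differences are cosmetic (you carry the $-f_{out}^t(x_{out}^t)$ term as a separate piece and apply smoothness to each $f_i^t$ individually), and all intermediate bounds match the paper's termwise.
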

\begin{proof}
    We analyze the effects of arrivals and departures separately.
    Let $g$ denote the local cost function of the joining agent at an arrival, then $f^{t+1}(x^{t+1}) = f^t(x^t)+g(1)$ and
    \begin{align}
    \label{eq:proof:prop:CDC:UpperBounds:Dft:Arrival}
        \Delta f_t 
        = f^t(x^t) + g(1) - f^t(x^t) 
        = g(1)
        \leq \frac\beta2,
    \end{align}
    where the last inequality follows from Assumption~\ref{ass:CDC:LocalCosts}, and in particular the $\beta$-smoothness of $g$.
    
    Consider now a departure, and let $\ell$ denote the label of the leaving agent, such that $f^{t+1}(\estx{t+1}) = \sum_{i\neq \ell} f_i^t(\estx{t+1}_i)$, with $\estx{t+1}_i = \estx{t}_i + \frac{\estx{t}_i+\estx{t}_\ell}{n}$ from \eqref{eq:CDC:Statement:DepartureRule}. 
    From the definition of departures, $\ell$ is uniformly selected among the $n$ agents in the system and by taking the expected value $\Delta f_t$ over the leaving agent, one gets the following, where we omit the reference to time to lighten the notation:
    \begin{align*}
        \E\brk{\Delta f_t}
        &=\sum_{\ell=1}^{n}\frac{1}{n}\prt{\sum_{i\neq \ell}f_i\prt{x_i+\frac{x_\ell-x_i}{n}}-f(x)}\\
        &=\frac{1}{n}\sum_{\ell=1}^{n}\prt{f\prt{\frac{n-1}{n}x+\frac{x_\ell}{n}\one_n}-f_\ell(x_\ell)}-f(x)\\
        &=\frac{1}{n}\sum_{\ell=1}^{n}f\prt{\frac{n-1}{n}x+\frac{x_\ell}{n}\one_n}-\frac{n+1}{n}f(x).
    \end{align*}
    Since $f$ is $\beta$-smooth from Assumption~\ref{ass:CDC:LocalCosts}, one has 
    \begin{multline*}
        f\prt{\frac{n-1}{n}x+\frac{x_\ell}{n}\one_n}\le\\ 
        f(x) + \frac{1}{n}\inProd{\nabla f(x)}{x_\ell\one_n-x}+\frac{\beta}{2n^2}\norm{x_\ell\one_n-x}^2,
    \end{multline*}
    and it follows that
    \begin{align}
        \E\brk{\Delta f_t}
        &\leq \frac{1}{n^2}\sum_{\ell=1}^n\inProd{\nabla f(x)}{x_\ell\one_n-x}\nonumber\\
        &\quad +\frac{\beta}{2n^3}\sum_{\ell=1}^n\norm{x_\ell\one_n-x}^2 - \frac1nf(x).\label{eq:delta_f_t_departure}
    \end{align}
    From Assumption~\ref{ass:CDC:LocalCosts}, in particular since $f$ is $\beta$-smooth and $\alpha$-strongly convex, it satisfies $\alpha\norm{x}^2\le\inProd{\nabla f(x)}{x}\le \beta\norm{x}^2$ for any $x$.
    Hence, reminding that $\sum_{\ell=1}^nx_\ell=n$, the first sum of \eqref{eq:delta_f_t_departure} can be upper bounded by
    \begin{align*}
        \sum_{\ell=1}^n\inProd{\nabla f(x)}{x_\ell\one_n-x}
        &=n\inProd{\nabla f(x)}{\one_n-x}\\
        &\le n(\beta n-\alpha\norm{x}^2).
    \end{align*}
    The second sum of \eqref{eq:delta_f_t_departure} can be expressed as:
    \begin{align*}
        \sum_{\ell=1}^n\norm{x_\ell\one_n-x}^2
        &=\sum_{\ell=1}^n\sum_{i=1}^n\prt{x_\ell^2-2x_\ell x_i+x_i^2}\\
        &=\sum_{\ell=1}^n\prt{nx_\ell^2-2nx_\ell+\norm{x}^2}\\
        &=2n\prt{\norm{x}^2-n}.
    \end{align*}
    Then \eqref{eq:delta_f_t_departure} is upper bounded by
    $$
    \E\brk{\Delta f_t}
    \le \frac{1}{n}\prt{\beta n-\alpha \norm{x}^2} + \frac{\beta}{n^2}\prt{\norm{x}^2-n} -\frac{1}{n}f(x).
    $$
    Lemmas~\ref{lem:CDC:UpperBounds:Bounds_||x||} and \ref{lem:CDC:UpperBounds:SImple_Bound} yield $\norm{x}^2\leq n^2$ and $f(x)\geq \frac\alpha2n$, so that
    \begin{equation}
    \label{eq:proof:CDC:prop:UpperBounds:Dft:Departure}
        \Ep{\Delta f_t}\leq 2\beta-\frac{3}{2}\alpha.
    \end{equation}
    The conclusion follows from adding \eqref{eq:proof:prop:CDC:UpperBounds:Dft:Arrival} and \eqref{eq:proof:CDC:prop:UpperBounds:Dft:Departure}.
\end{proof}

We can now use Proposition~\ref{prop:CDC:UpperBounds:Dft:Replacement} to study the evolution of the expected dynamical regret in the following theorem.

\begin{theorem}
\label{thm:CDC:UpperBounds:RegUB}
    In the setting of Section~\ref{sec:CDC:Statement}, there holds
    \begin{equation}
        \label{eq:thm:CDC:UpperBounds:RegUB}
        \E\Reg
        \leq C_0\sum_{t=1}^T\eta^t + (1-p)\sum_{t=0}^{T-1} \eta^t\prt{M_f+(T-t)\theta},
    \end{equation}
    where $\eta = 1 - \frac{p}{\kappa(n-1)}$ (with $p$ the probability that a given event is an update from Assumption~\ref{ass:CDC:IndepEvents}), $M_f = \frac{n}{2}(\beta n-\alpha)$ and $\theta=\frac52 \beta-\frac{3}{2}\alpha$.
\end{theorem}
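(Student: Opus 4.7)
The strategy is to establish a one-step recursion for $E_t := \E\brk{C_t}$ by conditioning on the event type at step $t+1$, and then unroll and sum over $t = 1,\dots,T$. The natural starting point is the identity $C_{t+1} = C_t + \Delta f_t - \Delta f_t^*$ recorded in Section~\ref{sec:CDC:UpperBounds:Reg}, combined with the law of total expectation based on the event probabilities in Assumption~\ref{ass:CDC:IndepEvents}.

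For an update step (probability $p$), the local cost functions are unchanged, so $f^{t+1}=f^t$, $\optx{t+1}=\optx{t}$, and hence $\Delta f_t^*=0$. The standard contraction result for RCD on the complete-graph resource allocation problem (from \cite{necoara2013random,monnoyer2022random}) then yields $\E\brk{C_{t+1}\mid U, x^t} \leq \eta' C_t$ with $\eta' := 1-\frac{1}{\kappa(n-1)}$. For a replacement step (probability $1-p$), Proposition~\ref{prop:CDC:UpperBounds:Dft:Replacement} bounds $\E\brk{\Delta f_t\mid R, x^t}\leq \theta$, while the drift $-\Delta f_t^*$ of the optimal value can be controlled using Lemma~\ref{lem:CDC:UpperBounds:SImple_Bound} together with the crude global estimate \eqref{eq:prop:Oversimplified:SimpleBound}, which in particular guarantees $C_{t+1}\leq M_f$ deterministically. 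Taking total expectation and using $p\eta' + (1-p) = \eta$ produces a linear recursion of the form $E_{t+1} \leq \eta\, E_t + (1-p)\,\delta_t$, where $\delta_t$ encodes the two distinct contributions of a replacement.

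Unrolling the recursion yields $E_t \leq \eta^t E_0 + (1-p)\sum_{s=0}^{t-1}\eta^{t-1-s}\delta_s$; summing over $t$ and applying the double-sum identity $\sum_{t=1}^T \sum_{s=0}^{t-1}\eta^s = \sum_{s=0}^{T-1}(T-s)\eta^s$ produces the two families of terms in \eqref{eq:thm:CDC:UpperBounds:RegUB}: the factor $(T-t)\theta$ reflects the accumulation over the remaining horizon of the per-step perturbation $\theta$ from Proposition~\ref{prop:CDC:UpperBounds:Dft:Replacement}, while the factor $M_f$ captures the once-per-replacement worst-case jump in suboptimality inherited from \eqref{eq:prop:Oversimplified:SimpleBound}. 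The main obstacle is bookkeeping these two contributions cleanly so that they appear in the combined form $M_f + (T-t)\theta$ in the final sum rather than being lumped into a single larger constant; this requires treating Proposition~\ref{prop:CDC:UpperBounds:Dft:Replacement} as a \emph{per-step} expected perturbation while invoking \eqref{eq:prop:Oversimplified:SimpleBound} only to absorb the one-shot deviation of the optimal value after a replacement, and then carefully using the commutation between the summation over $t$ and the geometric decay factors $\eta^t$.
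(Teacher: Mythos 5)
Your overall architecture matches the paper's: the recursion $C_{t+1}=C_t+\Delta f_t-\Delta f_t^*$, conditioning on the event type, the RCD contraction $\Ep{C_{t+1}\mid C_t,U}\leq\prt{1-\tfrac{1}{\kappa(n-1)}}C_t$ for updates, Proposition~\ref{prop:CDC:UpperBounds:Dft:Replacement} for replacements, the combination $p\gamma+(1-p)=\eta$, and the unrolling/double-sum exchange are all exactly what the paper does. However, there is a genuine gap in how you produce the $M_f$ term. You propose to control the optimal-value drift $-\Delta f_t^*$ \emph{per replacement step} via Lemma~\ref{lem:CDC:UpperBounds:SImple_Bound} and \eqref{eq:prop:Oversimplified:SimpleBound}, folding it into a per-step perturbation $\delta_t\leq\theta+M_f$. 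Unrolling that recursion and applying your double-sum identity gives
\begin{equation*}
\E\Reg\leq C_0\sum_{t=1}^T\eta^t+(1-p)\sum_{t=0}^{T-1}(T-t)\eta^t\prt{\theta+M_f},
\end{equation*}
in which $M_f$ is multiplied by $(T-t)$. Since $M_f=\frac n2(\beta n-\alpha)$ dominates $\theta$, this is strictly weaker than \eqref{eq:thm:CDC:UpperBounds:RegUB} and does not imply it. You clearly sense this problem ("rather than being lumped into a single larger constant"), but the remedy you sketch — invoking \eqref{eq:prop:Oversimplified:SimpleBound} "only to absorb the one-shot deviation of the optimal value after a replacement", or the fact that $C_{t+1}\leq M_f$ deterministically — is not the mechanism that works: $C_{t+1}\leq M_f$ is a bound on the suboptimality gap, not on $-\Delta f_t^*$, and any per-step bound on $-\Delta f_t^*$ inevitably picks up the horizon factor $(T-t)$.

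The missing idea is a telescoping argument: keep $\Ep{\Delta f_j^*}$ symbolic through the unrolling, so that after exchanging the order of summation the drift terms appear as $-\sum_{j}\Ep{\Delta f_j^*}$ over a whole window of iterations. That sum telescopes to $\Ep{f^{1}(\optx{1})-f^{T-t+1}(\optx{T-t+1})}$, which Lemma~\ref{lem:CDC:UpperBounds:SImple_Bound} bounds by $M_f$ \emph{once per value of $t$ in the outer sum}, independently of the window length. This is precisely how the paper gets $\eta^t\prt{M_f+(T-t)\theta}$ instead of $(T-t)\eta^t\prt{M_f+\theta}$. With that single change your plan goes through; without it, the proof as outlined establishes only the weaker bound.
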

\begin{proof}
    Let $\gamma=1-\frac{1}{\kappa(n-1)}$ denote the contraction rate of the RCD algorithm as defined in \eqref{eq:CDC:Statement:RCD} \cite{necoara2013random}.
    Remember that there holds $C_{t+1} = C_t+\Delta f_t + \Delta f_t^*$ for all times $t$.
    Hence, at any time-step $t$ one has 
    \begin{align}
        \label{eq:proof:thm:EReg:Ct+1}
        \Ep{C_{t+1}} 
        &= \Ep{C_t+\Delta f_t-\Delta f_t^*}.
    \end{align}
    From Assumption~\ref{ass:CDC:IndepEvents} the event at iteration $t$ is an update, denoted $U_t$, with probability $p$, or a replacement, denoted $R_t$, with probability $1-p$.
    
\noindent    In the case of an update, we have $\optx{t+1}=\optx{t}$, so that $\Delta f_t^*=0$.
    Hence, we have $\Delta f_t = C_{t+1}-C_t$, and since $\Ep{C_{t+1}|C_t,U_t}\leq \gamma C_t$ with the RCD algorithm from \cite{necoara2013random}:
    \begin{align}
        \label{eq:proof:thm:EReg:UtOnDeltaft}
        \Ep{\Delta f_t|,U_t}
        &= \Ep{C_{t+1}-C_t|U_t}
        \leq (\gamma-1)\Ep{C_t}.
    \end{align}
    In the replacement case, there holds
    \begin{align}
        \label{eq:proof:thm:EReg:RtOnDeltaft}
        \Ep{\Delta f_t|R_t}
        &\leq \theta 
        = \frac52\beta-\frac{3}{2}\alpha,
    \end{align}
    where $\theta$ comes from Proposition~\ref{prop:CDC:UpperBounds:Dft:Replacement}.
    Injecting \eqref{eq:proof:thm:EReg:UtOnDeltaft} and \eqref{eq:proof:thm:EReg:RtOnDeltaft} into \eqref{eq:proof:thm:EReg:Ct+1}
    then yields
    \begin{align}
        \label{eq:proof:thm:EReg:DTsystem}
        \Ep{C_{t+1}}\! 
        &\leq\! \Ep{C_t}\! +\! p(\gamma-1)\Ep{C_t}\! +\! (1-p)\prt{\theta - \Ep{\Delta f_t^*}}\nonumber\\
        &= \eta\Ep{C_t} + (1-p)\prt{\theta - \Ep{\Delta f_t^*}},
    \end{align}
    where $\eta = 1 + p(\gamma-1)$. 
    Expression \eqref{eq:proof:thm:EReg:DTsystem} actually describes the evolution of a discrete-time dynamical system of the type $v[k+1]\leq Av[k]+Bu[k]$.
    Standard results on that framework yield $v[k]\leq A^kv[0]+\sum_{j=0}^{k-1}A^{k-j-1}Bu[j]$, and we obtain
    \begin{align}
        \Ep{C_t} \leq \eta^tC_0 + (1-p) \sum_{j=0}^{t-1}\eta^{t-j-1}\prt{\theta - \Ep{\Delta f_t^*}}.
    \end{align}
    Injecting this last result into \eqref{eq:CDC:Reg} then yields
    \begin{align*}
        &\E\Reg
        = \sum_{t=1}^T \Ep{C_t}\\
        &\ \leq\! C_0\sum_{t=1}^T\eta^t\! +\! (1-p) \sum_{t=1}^T\prt{\sum_{j=0}^{t-1}\eta^{t-j-1}\prt{\theta - \Ep{\Delta f_t^*}}}.
    \end{align*}
    After some term re-organization, it becomes
    \begin{align*}
        \E\Reg
        &\leq C_0\sum_{t=1}^T\eta^t + (1-p)\sum_{t=0}^{T-1}(T-t)\eta^t\theta\\
        &\ \ \ - (1-p)\sum_{t=0}^{T-1}\eta^t\prt{\sum_{j=0}^{T-t}\Ep{\Delta f_j^*}}.
    \end{align*}
    Finally, using Lemma~\ref{lem:CDC:UpperBounds:SImple_Bound}, one concludes that
    \begin{align*}
        &-\sum_{j=0}^{T-t}\Ep{\Delta f_j^*} 
        = -\Ep{\sum_{j=0}^{T-t}\Delta f_j^*}\\
        &\ \ \ = -\Ep{f^{T-t+1}(\optx{T-t+1})-f^1(\optx{1})}
        \leq \frac{n}{2}(\beta n-\alpha),
    \end{align*}
    and $M_f = \frac{n}{2}(\beta n-\alpha)$ yields the conclusion.
\end{proof}

We now analyze the asymptotic behavior of the averaged regret in the following corollary.

\begin{cor}
\label{cor:CDC:UpperBounds:AsymptoticAvgRegUB}
    Let $\rho_R:=\frac{1-p}{p}$.
    In the same setting as that of Theorem~\ref{thm:CDC:UpperBounds:RegUB}, there holds
    \begin{equation}
        \label{eq:cor:CDC:UpperBounds:AsymptoticAvgRegUB}
        \lim_{T\to\infty}\frac{\E\Reg}{T}
        \leq \rho_R(n-1)\beta\prt{\frac{5\kappa-3}{2}}.
    \end{equation}
\end{cor}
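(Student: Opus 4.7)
The plan is to divide the bound \eqref{eq:thm:CDC:UpperBounds:RegUB} of Theorem~\ref{thm:CDC:UpperBounds:RegUB} by $T$ and identify which contributions vanish in the limit and which survive. Since $\eta = 1 - \frac{p}{\kappa(n-1)} \in (0,1)$, the partial geometric sums $\sum_{t=1}^T \eta^t$ are uniformly bounded by $\frac{\eta}{1-\eta}$, hence $\frac{C_0}{T}\sum_{t=1}^T\eta^t \to 0$. The remaining contribution is
\begin{equation*}
\frac{1-p}{T}\sum_{t=0}^{T-1}\eta^t \bigl( M_f + (T-t)\theta \bigr),
\end{equation*}
which I would split as $\frac{(1-p)M_f}{T}\sum_{t=0}^{T-1}\eta^t + \frac{(1-p)\theta}{T}\sum_{t=0}^{T-1}(T-t)\eta^t$. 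The first piece again vanishes since the geometric sum is bounded.

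The key computation is therefore to evaluate $\lim_{T\to\infty}\tfrac{1}{T}\sum_{t=0}^{T-1}(T-t)\eta^t$. I would write
\begin{equation*}
\sum_{t=0}^{T-1}(T-t)\eta^t = T\sum_{t=0}^{T-1}\eta^t - \sum_{t=0}^{T-1}t\eta^t,
\end{equation*}
note that $\sum_{t=0}^{T-1}\eta^t \to \frac{1}{1-\eta}$ and that $\sum_{t=0}^{T-1}t\eta^t$ remains bounded (by $\frac{\eta}{(1-\eta)^2}$), so the second sum divided by $T$ tends to zero while the first gives $\frac{1}{1-\eta}$. This yields
\begin{equation*}
\lim_{T\to\infty}\frac{\E\Reg}{T} \leq \frac{(1-p)\theta}{1-\eta}.
\end{equation*}

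It then remains an algebraic rewriting: substituting $1-\eta = \frac{p}{\kappa(n-1)}$ gives $\frac{1-p}{1-\eta} = \rho_R\kappa(n-1)$, and using $\theta = \frac{5}{2}\beta - \frac{3}{2}\alpha$ together with $\kappa = \beta/\alpha$ yields
\begin{equation*}
\kappa\theta = \frac{5\kappa\beta - 3\kappa\alpha}{2} = \beta\cdot\frac{5\kappa - 3}{2},
\end{equation*}
leading directly to \eqref{eq:cor:CDC:UpperBounds:AsymptoticAvgRegUB}. No step is technically delicate; the only minor care needed is in the reindexing of the double sum and in verifying that the Cesàro-type argument $\frac{1}{T}\sum t\eta^t \to 0$ is valid, which it is because the series $\sum t\eta^t$ converges absolutely for $\eta<1$.
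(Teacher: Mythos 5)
Your proposal is correct and follows essentially the same route as the paper: divide the bound of Theorem~\ref{thm:CDC:UpperBounds:RegUB} by $T$, discard the uniformly bounded geometric-sum terms, use the boundedness of $\sum_t t\eta^t$ to reduce the surviving term to $\frac{(1-p)\theta}{1-\eta}$, and finish with the substitution $1-\eta=\frac{p}{\kappa(n-1)}$ and $\kappa\alpha=\beta$. The algebraic identification $\kappa\theta=\beta\frac{5\kappa-3}{2}$ is exactly what the paper's final step relies on.
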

\begin{proof}
    Starting from \eqref{eq:thm:CDC:UpperBounds:RegUB}, we have
    \begin{align*}
        \label{eq:thm:Oversimplified:RegretUB:AvgReg}
            \frac{\E\Reg}{T}
            &\leq C_0\sum_{t=1}^T \tfrac{\eta^t}{T} + (1-p)\sum_{t=0}^{T-1}\prt{M_f\tfrac{\eta^t}{T} + \prt{1-\tfrac tT}\eta^t \theta}.
            % \frac{\Ep{Reg_T}}{T}
            % &\leq C_0\sum_{t=1}^T \frac{\eta^t}{T} + (1-p)M_f\sum_{t=0}^{T-1}\frac{\eta^t}{T}\\
            % &\ \ \ + (1-p)\sum_{t=0}^{T-1}\prt{1-\frac tT}\eta^t \theta.
    \end{align*}
    Remember that $\eta = 1-\frac{p}{\kappa(n-1)}\leq1$, so that $\sum_{t=1}^T\eta^t<T$, and $\lim_{T\to\infty}\sum_{t=1}^T \frac{\eta^t}{T}=0$.
    Hence
    \begin{align*}
        \lim_{T\to\infty}\frac{\E\Reg}{T}
        &\leq \lim_{T\to\infty}(1-p)\sum_{t=0}^{T-1}\prt{1-\frac tT}\eta^t \theta.
    \end{align*}
    Moreover, for $\eta<1$, one shows $\sum_{j=0}^\infty \eta^j = \frac{1}{1-\eta}$ and $\sum_{j=0}^\infty j\eta^j = \frac{\eta}{(1-\eta)^2}$.
    The latter implies that $\lim_{T\to\infty}\frac1T\sum_{t=0}^{T-1}t\eta^t=0$, and therefore there holds
    \begin{align*}
        \lim_{T\to\infty}\frac{\E\Reg}{T}
        &\leq (1-p)\frac{1}{1-\eta}\theta
        = \frac{1-p}{p}\kappa(n-1)\theta,
    \end{align*}
    and the conclusion follows from the definitions of $\theta$ in Theorem~\ref{thm:CDC:UpperBounds:RegUB}, and from $\rho_R:=\frac{1-p}{p}$.
\end{proof}

The upper bounds for the potential benefit \eqref{eq:thm:CDC:UpperBounds:PotUB} and the dynamical regret \eqref{eq:thm:CDC:UpperBounds:RegUB} linearly scale with $T$.
This behavior is rather natural for the former, which does not depend on the algorithm.
For the latter, it is most likely unavoidable due to the introduction at each replacement of perturbations of non-decaying magnitude which no algorithm can instantaneously compensate.
%This conjecture will be later supported by simulations that show a linear growth in $T$ as well.
%For the latter, it follows from the introduction of a fixed perturbation at each replacement which no algorithm can compensate due to the random choice of the new cost function.
Interestingly, this behavior contrasts with standard results in online optimization, where a sublinear growth in $T$ is desired to cancel the asymptotic averaged regret \cite[Ch.~1.1]{DO:online-varyingFunctions}.
However, these results usually apply on another definition of the regret, where $\estx{t}$ is compared with an overall time-independent strategy $x^*$ computed over all $T$ iterations, in opposition with $\optx{t}$ which is optimal for each iteration.
%As we will see later, the linear growth in $T$ is also observed from simulations, supporting the conjecture that
%Hence, the linear behavior of the dynamical regret is expected in our case, and probably inherent to the problem.
%We will see later that simulations also show this linear behavior in $T$, and thus support this conjecture.

Moreover, the corresponding asymptotic upper bounds linearly grow with $n$ and $\alpha$ for \eqref{eq:asymptotic_potential_benefit}, and with $n-1$ and $\beta$ for \eqref{eq:cor:CDC:UpperBounds:AsymptoticAvgRegUB}, consistently with their expected behavior.
In particular, the \CMnew{scaling of \eqref{eq:cor:CDC:UpperBounds:AsymptoticAvgRegUB} with $n-1$} follows from the convergence rate of the RCD algorithm $\gamma = 1-\frac{1}{\kappa(n-1)}$.
Interestingly, \eqref{eq:cor:CDC:UpperBounds:AsymptoticAvgRegUB} \CMnew{is proportional to} $\rho_R(n-1) = (1-p)\frac{n-1}{p}$, and the bound can thus be seen as the ratio between the probability for a given agent to be involved in a RCD update $\frac{p}{n-1}$ (involved in $\gamma$), and the impact of replacements at the system level $1-p$, independently of $n$.
This is consistent with the bound on the impact of replacements in \eqref{eq:prop:CDC:UpperBounds:Dft:Replacement} which is independent of $n$ (by contrast, alternative situations such as \textit{e.g.}, if all agents were to be reset at each replacement are expected to generate an impact growing with $n$).
Hence, for small values of $\rho_R$ (\textit{i.e.}, rare replacements), the bound guarantees that the asymptotic dynamical regret remains reasonably bounded, and decays to zero when $\rho_R\to0$, \textit{i.e.}, for closed systems.

Finally, observe that \eqref{eq:cor:CDC:UpperBounds:AsymptoticAvgRegUB} \CMnew{is proportional to} $\frac{5\kappa-3}{2}$, consistently with the fact that a larger interval for the possible curvature of the cost functions should generate a larger potential error at replacements.
This factor is a potential source of conservatism, \textit{e.g.}, with respect to \eqref{eq:asymptotic_potential_benefit} where \CMnew{the scaling is in} $\frac12(\kappa-1)$.
More generally, it is not clear yet whether other algorithms than the RCD might provide tighter bounds.

\begin{remark}
The proofs of Theorem~\ref{thm:CDC:UpperBounds:RegUB} and Corollary~\ref{cor:CDC:UpperBounds:AsymptoticAvgRegUB} can directly be adapted to any contraction rate $\gamma<1$, and are thus easily generalized to any other algorithm that guarantees linear convergence; in particular 
$\lim_{T\to\infty}\frac{\E\Reg}{T}\leq \rho_R\frac{\theta}{1-\gamma}$.
\end{remark}

%%%%%%%%%%%%%%%%%%%%%%%%%%%%%%%%%%%%%%%%%%%%%%
\subsection{The case of quadratic functions}
\label{sec:CDC:UpperBounds:Quad}
The bound on the expected dynamical regret can be refined for the particular case where all local functions are quadratic, \textit{i.e.} satisfy the following additional assumption.
\begin{assumption}[Quadratic functions]
    \label{ass:CDC:UpperBounds:Quad}
    The local cost function of any agent $i$ at time $t$ is of the form
    \begin{align}
        &f_i(x_i) = \phi_ix_i^2,&
        &\phi_i\in\brk{\tfrac\alpha2,\tfrac\beta2}.
    \end{align}
\end{assumption}
\vspace{2mm}
The parameter $\phi_i$ is randomly chosen according to a distribution with a finite support determined by the interval $\brk{\frac{\alpha}{2},\frac{\beta}{2}}$.
Observe that functions satisfying Assumption~\ref{ass:CDC:UpperBounds:Quad} necessarily satisfy Assumption~\ref{ass:CDC:LocalCosts} as well.

Under Assumption~\ref{ass:CDC:UpperBounds:Quad}, we can obtain a tighter bound than that of Proposition~\ref{prop:CDC:UpperBounds:Dft:Replacement}, presented in the following proposition.
\begin{proposition}
\label{prop:CDC:UpperBounds:Quad:Dft:Replacement}
    In the setting of Section~\ref{sec:CDC:Statement}, and under Assumption~\ref{ass:CDC:UpperBounds:Quad}, the replacement of an agent $R$ results in
    \begin{align}
        \label{aq:prop:CDC:UpperBounds:Quad:Dft:Replacement}
        \E\brk{\Delta f_t\ |\ R} 
        \leq \frac{3n^2-3n+1}{2n^2}\prt{\beta-\alpha}.
        %\leq -\frac{3n^2-3n+1}{2n^2}\alpha+\frac{2n^2-3n+1}{2n^2}\beta
    \end{align}
\end{proposition}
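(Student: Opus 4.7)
The plan is to mirror the proof of Proposition~1, splitting $\E[\Delta f_t\,|\,R]$ into the arrival and departure contributions, but exploiting that Taylor's expansion of a quadratic $f_i(x_i)=\phi_ix_i^2$ holds as an \emph{equality} with coefficients $\phi_i\in[\alpha/2,\beta/2]$. Replacing the $\beta$-smoothness upper bound used in Proposition~1 by this equality is what should make the final bound depend symmetrically on $\alpha$ and $\beta$, and hence produce the announced factor $\beta-\alpha$ instead of the asymmetric $\frac{5}{2}\beta-\frac{3}{2}\alpha$.

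The arrival contributes $g(1)=\phi_{in}\leq\beta/2$, exactly as in Proposition~1. For the departure of agent $\ell$, applying the quadratic identity to $y=\frac{n-1}{n}x+\frac{x_\ell}{n}\one_n$ gives
\begin{align*}
f(y)-f(x)=\frac{2}{n}\bigl(x_\ell S_{\phi x}-f(x)\bigr)+\frac{1}{n^2}\sum_i\phi_i(x_\ell-x_i)^2,
\end{align*}
with $S_{\phi x}:=\sum_i\phi_ix_i$. Subtracting $\phi_\ell x_\ell^2$ and taking the expectation over $\ell$ uniform in $\{1,\ldots,n\}$ via $\E[x_\ell]=1$, $\E[x_\ell^2]=\norm{x}^2/n$, and $\E[\phi_\ell x_\ell^2]=f(x)/n$ yields the closed form
\begin{align*}
\E_\ell[\Delta f_t^{\mathrm{dep}}]=\frac{1-3n}{n^2}f(x)+\frac{2(n-1)}{n^2}S_{\phi x}+\frac{\norm{x}^2 S_\phi}{n^3},
\end{align*}
where $S_\phi:=\sum_i\phi_i$. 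Adding the arrival bound reduces the proof to maximizing this quantity uniformly over $\phi_i\in[\alpha/2,\beta/2]$ and $x\in\Sn$.

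To tighten the estimate, I would re-parametrize $\phi_i=\frac{\alpha}{2}+\rho_i$ with $\rho_i\in\bigl[0,\tfrac{\beta-\alpha}{2}\bigr]$. The pure-$\alpha$ part of the expression then collapses to $\frac{\alpha(3n-2)(n-\norm{x}^2)}{2n^2}$, which is nonpositive by Lemma~1, and the $\rho$-part reads $\frac{1}{n^3}\sum_i\rho_i\,p(x_i)$ for the concave polynomial $p(t)=\norm{x}^2-n(3n-1)t^2+2n(n-1)t$. Maximizing componentwise over the $\rho_i$'s reduces the problem to bounding $\sum_i[p(x_i)]^+$ over the feasible set. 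The main obstacle is this last estimate: a direct bound gives only a constant of order $\frac{2n-1}{n}$, whereas the target is $\frac{3n^2-3n+1}{2n^2}$. The tight constant is expected to emerge from the worst-case configuration $x_1=n$, $x_2=\cdots=x_n=0$ together with $\phi_1=\alpha/2$ and $\phi_i=\beta/2$ for $i>1$, in which $p(n)=-n^2(3n^2-3n+1)$, $p(0)=n^2$, and $\sum_i[p(x_i)]^+=(n-1)n^2$; carefully combining these with the $\alpha$-part and the arrival bound $\phi_{in}\leq\beta/2=\frac{\alpha}{2}+\frac{\beta-\alpha}{2}$ should reveal the precise coefficient $\frac{3n^2-3n+1}{2n^2}$.
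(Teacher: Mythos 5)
Your setup is sound: the arrival contribution $\phi_{in}\le\beta/2$ and the closed-form expectation over the leaving agent,
$\E_\ell[\Delta f_t^{\mathrm{dep}}]=\tfrac{1-3n}{n^2}f(x)+\tfrac{2(n-1)}{n^2}S_{\phi x}+\tfrac{\norm{x}^2S_\phi}{n^3}$,
both check out, and the reparametrization $\phi_i=\tfrac\alpha2+\rho_i$ with the nonpositive $\alpha$-part is legitimate (note the collapse to $\tfrac{\alpha(3n-2)(n-\norm{x}^2)}{2n^2}$ only holds once the arrival's $\alpha/2$ is folded in, but the sign conclusion survives either way). The problem is that the proof then stops exactly where it gets hard: you reduce everything to bounding $\sup_{x\in\Sn}\sum_i[p(x_i)]^+$ with $p(t)=\norm{x}^2-n(3n-1)t^2+2n(n-1)t$, exhibit a candidate worst case, and concede that the only bound you can actually prove gives $\tfrac{2n-1}{n}$, which is \emph{larger} than the target $\tfrac{3n^2-3n+1}{2n^2}$. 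Verifying that $x=(n,0,\dots,0)$ truly maximizes $\sum_i[p(x_i)]^+$ is a genuinely nontrivial nonconvex problem ($p$ depends on $x$ both through its argument and through $\norm{x}^2$), and even if it were verified it would yield the constant $\tfrac{2n-1}{2n}$, not $\tfrac{3n^2-3n+1}{2n^2}$ — so the claimed coefficient cannot simply ``emerge'' from that configuration; it would have to come from some intermediate relaxation you have not identified. As written, the argument does not establish the proposition.

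The paper avoids this optimization entirely by a coarser but self-contained grouping. Expanding $\sum_{\ell}\frac1n\sum_{i\neq\ell}\phi_i\prt{x_i+\frac{x_\ell-x_i}{n}}^2$, it keeps the term $\sum_\ell\sum_{i\neq\ell}\phi_ix_i^2=(n-1)f(x)$ \emph{exact} (its coefficient is negative, so it is grouped with $-f(x)/n$ to produce $-\frac{3n^2-3n+1}{n^3}f(x)$), and applies $\phi_i\le\beta/2$ only to the manifestly nonnegative cross terms and $x_\ell^2$ terms, which sum to $\frac{\beta(n-1)}{2n^3}\prt{2n^2-\norm{x}^2}$. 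The elementary bounds $f(x)\ge\frac\alpha2 n$ and $\norm{x}^2\ge n$ then give $-\frac{3n^2-3n+1}{2n^2}\alpha+\frac{2n^2-3n+1}{2n^2}\beta$ for the departure, and adding the arrival's $\beta/2$ produces exactly $\frac{3n^2-3n+1}{2n^2}(\beta-\alpha)$. If you want to salvage your route, the fix is essentially to make the same choice of which terms receive the $\beta/2$ bound and which keep their exact $\phi_i$, rather than pushing the full componentwise maximization over $(\rho,x)$.
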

\vspace{2mm}
\begin{proof}
    The arrival case is treated the same was as in the proof of Proposition~\ref{prop:CDC:UpperBounds:Dft:Replacement}, resulting in $\Delta f_t\leq\frac\beta2$.
    The departure case follows the same first steps with $f_i(x) = \phi_ix^2$, and
    \begin{align*}
        &\E\brk{\Delta f_t}
        = \sum_{\ell=1}^n \frac1n\prt{\sum_{i\neq \ell}\phi_i\prt{x_i+\tfrac{x_\ell-x_i}{n}}^2-f(x)}\\
        &\ \ \ =\frac1n\sum_{\ell=1}^n \prt{\sum_{i\neq \ell} \phi_i\prt{\prt{x_i+\tfrac{x_\ell-x_i}{n}}^2-x_i^2} - \phi_\ell x_\ell^2}\\
        &\ \ \ = \frac{1}{n^2}\sum_{\ell=1}^n\sum_{i\neq l}\phi_i\prt{\tfrac{1-2n}{n}x_i^2+2x_ix_\ell\tfrac{n-1}{n}+\tfrac{x_\ell^2}{n}} - \frac{f(x)}{n}.
    \end{align*}
    Using the fact that $\sum_{\ell=1}^n\sum_{i\neq \ell}\phi_ix_i^2 = (n-1)f(x)$ and that $\phi_i\leq\frac\beta2$ for all $i$, we obtain
    \begin{align*}
        \Ep{\Delta f_t}
        &\leq \prt{\frac{1}{n^2}\frac{1-2n}{n}(n-1)-\frac1n}f(x)\\ 
        &\ \ \ \ + \frac{\beta}{2n^2}\sum_{\ell=1}^n\sum_{i\neq \ell}\prt{2x_ix_\ell\frac{n-1}{n} + \frac{x_\ell^2}{n}}.
    \end{align*}
    Observe that $\sum_{\ell=1}^n \sum_{i\neq l}x_ix_\ell=n^2-\norm{x}^2$ and $\sum_{\ell=1}^n\sum_{i\neq \ell}x_\ell^2 = (n-1)\norm{x}^2$, so that a few algebraic manipulations yield
    \begin{align*}
        \Ep{\Delta f_t}
        &\leq -\frac{3n^2-3n+1}{n^3}f(x) + \beta\frac{n-1}{2n^3}\prt{2n^2-\norm{x}^2}.
    \end{align*}
    Using $\norm{x}^2\geq n$ (from Lemma~\ref{lem:CDC:UpperBounds:Bounds_||x||}) and $f(x)\geq \frac\alpha2n$ (from Lemma~\ref{lem:CDC:UpperBounds:SImple_Bound}) then yields 
    \begin{align*}
    \Ep{\Delta f_t}
        &\leq -\frac{3n^2-3n+1}{2n^3}\alpha + \beta\frac{2n^2-3n+1}{2n^3},
    \end{align*}
    and combining with the arrival case concludes the proof. 
\end{proof}

The result above allows us stating the following theorem, which improves Theorem~\ref{thm:CDC:UpperBounds:RegUB} and Corollary~\ref{cor:CDC:UpperBounds:AsymptoticAvgRegUB} respectively for the case of quadratic functions.
\begin{theorem}
\label{thm:CDC:UpperBounds:Quad:UBReg}
    In the setting of Section~\ref{sec:CDC:Statement}, and under Assumption~\ref{ass:CDC:UpperBounds:Quad}, there holds
    \begin{equation}
        \label{eq:thm:CDC:UpperBounds:Quad:UBReg}
        \E\Reg
        \leq C_0\sum_{t=1}^T\eta^t + (1-p)\sum_{t=0}^{T-1} \eta^t\prt{M_f+(T-t)\theta},
    \end{equation}
    where $\eta = 1 - \frac{p}{\kappa(n-1)}$ (with $p$ the probability that a given event is an update from Assumption~\ref{ass:CDC:IndepEvents}), $M_f = \frac{n}{2}(\beta n-\alpha)$ and $\theta=(\beta-\alpha)\frac{3n^2-3n+1}{2n^2}$.
    In particular,
    \begin{equation}
        \label{eq:thm:CDC:UpperBounds:Quad:AsymptoticUBReg}
        \lim_{T\to\infty}\frac{\E\Reg}{T}
        \leq \rho_R(n-1)\tfrac{3n^2-3n+1}{2n^2}\beta\prt{\kappa-1}.
    \end{equation}
\end{theorem}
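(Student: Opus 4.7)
The plan is to follow the proof of Theorem~\ref{thm:CDC:UpperBounds:RegUB} essentially verbatim, substituting Proposition~\ref{prop:CDC:UpperBounds:Quad:Dft:Replacement} for Proposition~\ref{prop:CDC:UpperBounds:Dft:Replacement} at the single place where the expected cost variation under a replacement enters. To begin, I would note that any function satisfying Assumption~\ref{ass:CDC:UpperBounds:Quad} automatically satisfies Assumption~\ref{ass:CDC:LocalCosts}, so every ingredient of the earlier argument remains available: the well-posedness from Proposition~\ref{prop:CDC:Statement:Positivity&RA}, the norm bound from Lemma~\ref{lem:CDC:UpperBounds:Bounds_||x||}, the function bound from Lemma~\ref{lem:CDC:UpperBounds:SImple_Bound}, the RCD contraction rate $\gamma = 1 - \tfrac{1}{\kappa(n-1)}$, and the decomposition $C_{t+1} = C_t + \Delta f_t - \Delta f_t^*$.

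Next I would condition on the event at iteration $t$. With probability $p$ an update occurs, so $\optx{t+1}=\optx{t}$ and $\Delta f_t^*=0$, while the RCD contraction gives $\Ep{\Delta f_t \mid U_t} \leq (\gamma - 1)\Ep{C_t}$ exactly as in \eqref{eq:proof:thm:EReg:UtOnDeltaft}. With probability $1-p$ a replacement occurs, and here I would invoke the tighter bound from Proposition~\ref{prop:CDC:UpperBounds:Quad:Dft:Replacement}, namely $\Ep{\Delta f_t \mid R_t} \leq \theta$ with the new value $\theta = (\beta-\alpha)\tfrac{3n^2-3n+1}{2n^2}$. Combining the two cases reproduces the recursion
$$\Ep{C_{t+1}} \leq \eta\,\Ep{C_t} + (1-p)\bigl(\theta - \Ep{\Delta f_t^*}\bigr),$$
with the same $\eta = 1 - \tfrac{p}{\kappa(n-1)}$ as in Theorem~\ref{thm:CDC:UpperBounds:RegUB} and only $\theta$ updated. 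Unrolling this linear scalar dynamical system, summing from $t=1$ to $T$, and bounding the telescoping optimal-variation term through Lemma~\ref{lem:CDC:UpperBounds:SImple_Bound} by $M_f = \tfrac{n}{2}(\beta n - \alpha)$ produces \eqref{eq:thm:CDC:UpperBounds:Quad:UBReg}.

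For the asymptotic statement \eqref{eq:thm:CDC:UpperBounds:Quad:AsymptoticUBReg}, I would mimic the proof of Corollary~\ref{cor:CDC:UpperBounds:AsymptoticAvgRegUB}: since $\eta<1$, the geometric sum $\sum_{t=0}^\infty \eta^t = \tfrac{\kappa(n-1)}{p}$ is finite, so the $C_0$ and $M_f$ contributions to $\E\Reg/T$ vanish as $T\to\infty$, while the dominant term satisfies $\tfrac1T\sum_{t=0}^{T-1}(T-t)\eta^t \to \sum_{j=0}^\infty \eta^j = \tfrac{\kappa(n-1)}{p}$. This yields $\lim_{T\to\infty}\E\Reg/T \leq \rho_R\,\kappa(n-1)\,\theta$, and plugging in the new $\theta$ together with the identity $\kappa(\beta-\alpha) = \beta(\kappa-1)$ gives the stated expression.

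I anticipate no genuine obstacle here: the technical content specific to the quadratic case is entirely concentrated in Proposition~\ref{prop:CDC:UpperBounds:Quad:Dft:Replacement}, which is already established. The theorem is then a mechanical propagation of that tighter replacement bound through the same dynamical-system argument used for Theorem~\ref{thm:CDC:UpperBounds:RegUB}, and the only thing to verify is that no other step of that earlier proof implicitly uses a property beyond $\alpha$-strong convexity and $\beta$-smoothness (which it does not).
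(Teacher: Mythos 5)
Your proposal is correct and matches the paper's proof, which itself simply states that one repeats the arguments of Theorem~\ref{thm:CDC:UpperBounds:RegUB} and Corollary~\ref{cor:CDC:UpperBounds:AsymptoticAvgRegUB} with Proposition~\ref{prop:CDC:UpperBounds:Quad:Dft:Replacement} replacing Proposition~\ref{prop:CDC:UpperBounds:Dft:Replacement}; you have in fact spelled out the substitution (including the identity $\kappa(\beta-\alpha)=\beta(\kappa-1)$ needed for the asymptotic form) more explicitly than the paper does.
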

\vspace{2mm}
\begin{proof}
The proof follows the exact same steps as those of Theorem~\ref{thm:CDC:UpperBounds:RegUB} and Corollary~\ref{cor:CDC:UpperBounds:AsymptoticAvgRegUB} where Proposition~\ref{prop:CDC:UpperBounds:Quad:Dft:Replacement} is used instead of Proposition~\ref{prop:CDC:UpperBounds:Dft:Replacement}.
\end{proof}

The upper bound for the quadratic case is qualitatively better than that of the general case; it was derived based on an additional information of the cost function, thus resulting in tighter bounds. 
In this case, the dependence of \eqref{eq:thm:CDC:UpperBounds:Quad:AsymptoticUBReg} is in $\frac32(\kappa-1)$, which is consistent with the result derived for the potential benefit \eqref{eq:asymptotic_potential_benefit}.
In particular, for $n$ becoming large, \eqref{eq:thm:CDC:UpperBounds:Quad:AsymptoticUBReg} and \eqref{eq:cor:CDC:UpperBounds:AsymptoticAvgRegUB} become equivalent up to a constant $\beta$.
%respectively scale with $\frac32(\beta-\alpha)$ and $\frac32(\beta-\alpha)+\beta$, and are thus equivalent up to a constant $\beta$ inducing a shift between both bounds.
Moreover, \eqref{eq:thm:CDC:UpperBounds:Quad:AsymptoticUBReg} becomes $0$ when $\kappa=1$, consistently with the expected behavior of the RCD algorithm for quadratic functions since all the cost functions would then be the same.
% , whereas a remaining $\beta$ can be observed with \eqref{eq:cor:CDC:UpperBounds:AsymptoticAvgRegUB}. 

%%%%%%%%%%%%%%%%%%%%%%%%%%%%%%%%%%%%%%%%%%%%%%%%%%%%%%%
\subsection{Numerical Results}
To illustrate the results of Theorems~\ref{thm:CDC:UpperBounds:PotUB} to \ref{thm:CDC:UpperBounds:RegUB}, we consider a system of $5$ agents with $\kappa=10$ and $\rho_R=0.0125$, the latter implies that on average there is one replacement every $80$ events.
We consider two possibilities: \emph{random replacements} (RR) where the local function is randomly uniformly chosen among the set of piecewise quadratic functions satisfying Assumption~\ref{ass:CDC:LocalCosts}, and \emph{adversarial replacements} (AR), where these functions are quadratic functions $\phi_ix^2$, with $\phi_i\in\brc{\frac\alpha2,\frac\beta2}$. 
%so that as many functions of each type are in the system.
The AR setting is expected to be less favorable than the RR setting, since replacements might result in the largest change of local functions.
\CMnew{Notice that the bounds \eqref{eq:thm:CDC:UpperBounds:PotUB} and \eqref{eq:thm:CDC:UpperBounds:RegUB} are independent of the distribution from which the local cost function are assigned to the agents when they join the system, so that they hold for any such assignment rule.}
%our results do not rely on the way we choose the local cost function of the joining agent at a replacement, as their derivations is based on analyzing the worst case of that aspect (in Theorem~\ref{thm:CDC:UpperBounds:PotUB} and Proposition~\ref{prop:CDC:UpperBounds:Dft:Replacement} respectively) and our bounds actually hold for any rule defining that choice.} 

\begin{figure}
    \centering
    \includegraphics[width=0.5\textwidth,clip = true, trim=2cm 11.5cm 1.5cm 11.75cm,keepaspectratio]{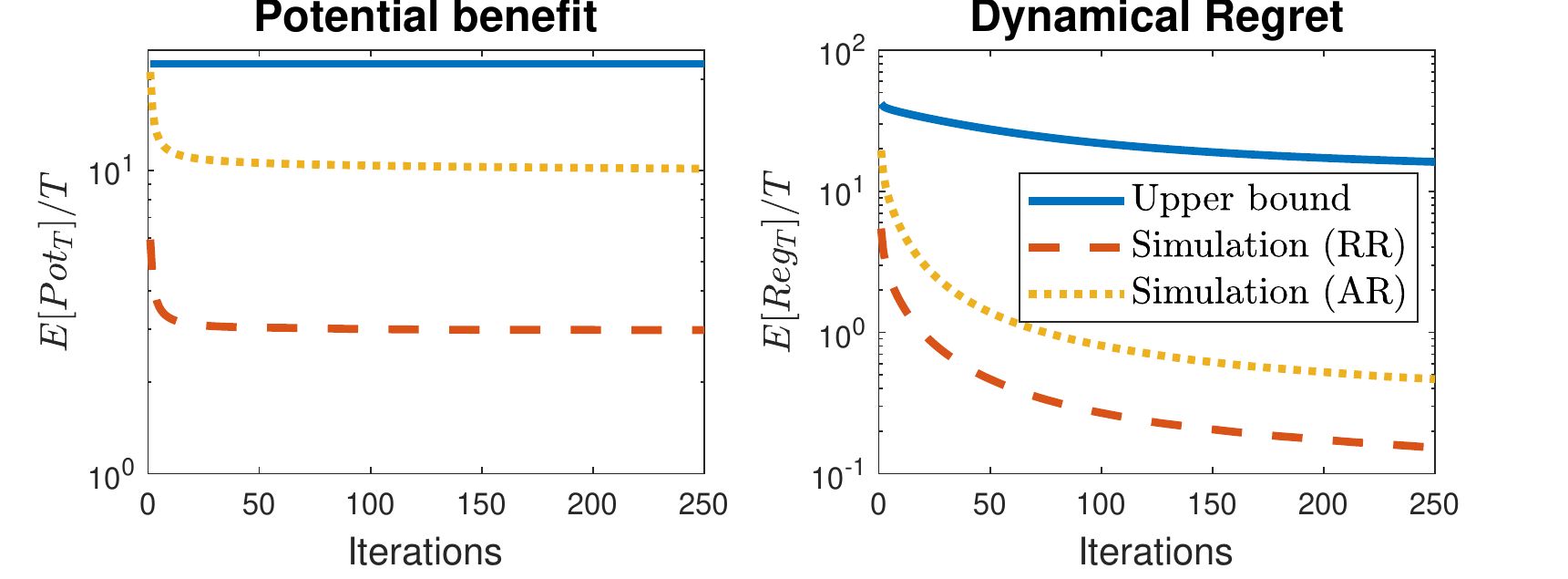}
    \caption{Evolution of the averaged asymptotic expected Potential Benefit (on the left) and dynamical regret (on the right) in a system of $5$ agents with $\rho_R=0.0125$ and $\kappa=10$.
    Each plot compares the upper bounds, respectively from \eqref{eq:asymptotic_potential_benefit} and \eqref{eq:cor:CDC:UpperBounds:AsymptoticAvgRegUB}, with simulated results, either with random replacements (RR) or adversarial replacements (AR).}
    \label{fig:CDC:Illustrations:Comp_Pot_Reg_n5_rho00125_kappa10}
\end{figure}

Fig.~\ref{fig:CDC:Illustrations:Comp_Pot_Reg_n5_rho00125_kappa10} compares the results of Theorem~\ref{thm:CDC:UpperBounds:PotUB} and Corollary~\ref{cor:CDC:UpperBounds:AsymptoticAvgRegUB} with simulations for both random and adversarial replacements in the setting described above. Even though the theoretical bounds are conservative, they capture well the qualitative behavior of these metrics.
In particular, consistently with $\Pot$ and $\E\Reg$ that grow linearly with $T$, the bounds in the figure do not converge to zero, and a remaining asymptotic error is observed.
Our bounds are tighter for the adversarial replacement case than the random replacement case, this suggests that our bounds might be tight for some particular choice of the joining functions at replacements, especially that on the potential benefit.

\begin{figure}
    \centering
    \includegraphics[width=0.4\textwidth,clip = true, trim=2.25cm 11cm 2cm 11cm,keepaspectratio]{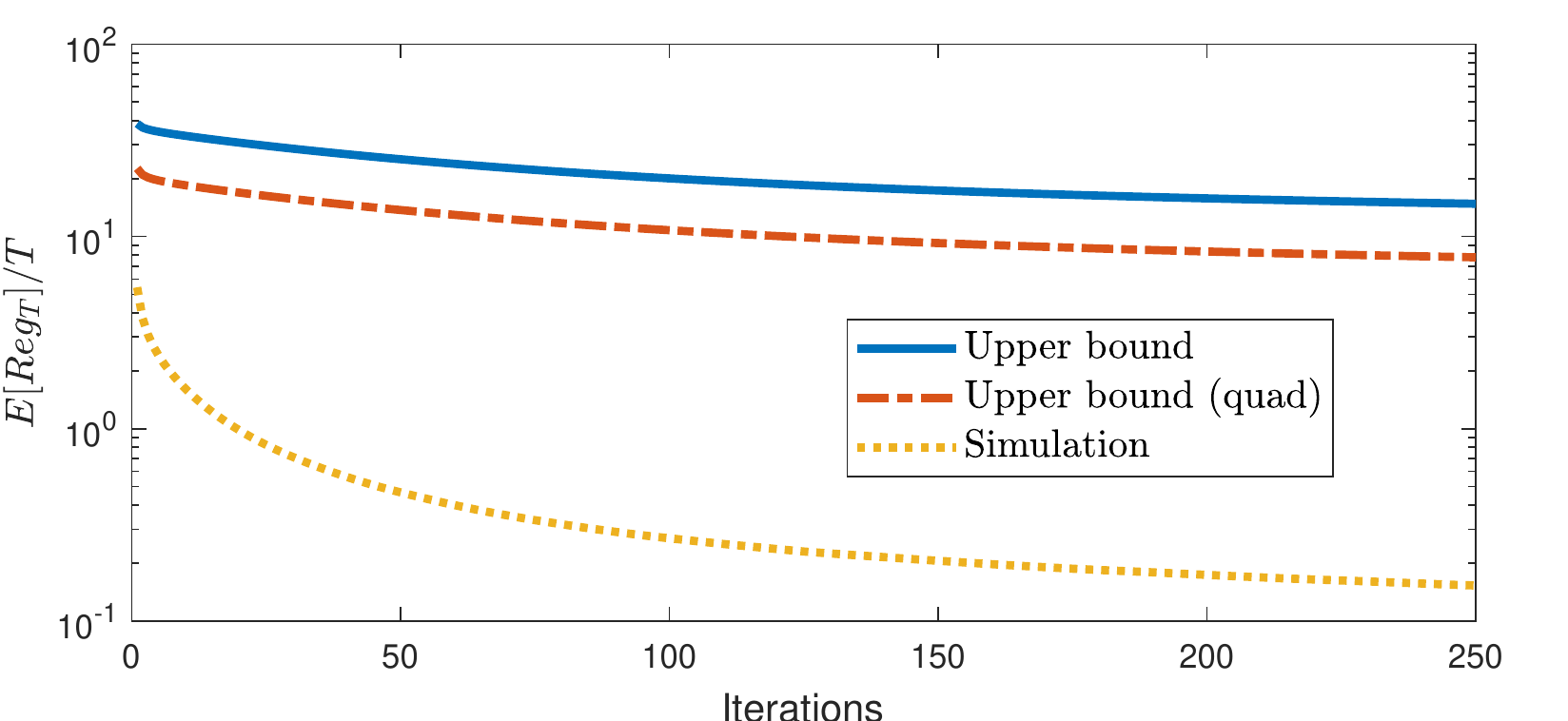}
    \caption{Evolution of the expected averaged regret for a system of $5$ agents holding quadratic functions with $\rho_R=0.0125$ and $\kappa=10$. 
    The plain blue line and the dash-dotted red line respectively correspond to the upper bounds of Theorems~\ref{thm:CDC:UpperBounds:RegUB} and \ref{thm:CDC:UpperBounds:Quad:UBReg} respectively.
    The dotted yellow line corresponds to simulation where we consider random replacements (RR).}
    \label{fig:CDC:Illustrations:RegComp_quad_n5_rho00125_k10}
\end{figure}

Fig.~\ref{fig:CDC:Illustrations:RegComp_quad_n5_rho00125_k10} compares the results of Theorems~\ref{thm:CDC:UpperBounds:RegUB} and \ref{thm:CDC:UpperBounds:Quad:UBReg} with simulations in the same setting as described previously, with random replacements (RR).
The figure shows that bound \eqref{eq:thm:CDC:UpperBounds:Quad:AsymptoticUBReg} is tighter for quadratic functions, following the fact that we have access to more information regarding the local cost functions, thus improving the estimation of the effect of replacements on the expected regret.

%%%%%%%%%%%%%%%%%%%%%%%%%%%%%%%%%%%%%%%%%%%%%%%%%%%%%%%
%%%%%%%%%%%%%%%%%%%%%%%%%%%%%%%%%%%%%%%%%%%%%%%%%%%%%%%
\section{Conclusion}
We analyzed the performance and behavior of the Random Coordinate Descent algorithm (RCD) for solving the optimal resource allocation problem in an open system subject to replacements of agents, resulting in variations of the total cost function and of the total amount of resource to be allocated.
We considered a simple preliminary setting where the budget is homogeneous and the graph is complete, and used tools inspired from online optimization to show that it is not possible to achieve convergence to the optimal solution with the RCD algorithm in expectation in open system, but that the error is expected to remain reasonable.

We have derived upper bounds on the evolution of the regret and the potential benefit in expectation and showed that due to the random choice of the new local cost function during replacements, an error is expected to be accumulated with time and cannot be compensated.
A natural continuation of this work is thus the derivation of the corresponding upper bound for the benefit, and of lower bounds for this quantities in order to validate the observed behavior.
More generally, our bounds could be extended to more general settings, and their tightness can be improved to match more accurately the actual performance of the algorithm.
Moreover, since our approach is based on the analysis of the effect of arrivals and departures of agents combined into replacements, the next step of this study is to generalize it to the case where the system size changes with the time, \textit{i.e.}, where arrivals and departures are decoupled.

\bibliographystyle{IEEEtran}

\bibliography{CDC_2022.bib}

\end{document}